\documentclass[twocolumn]{autart}
\usepackage{graphicx} 
\usepackage{amsmath}
\usepackage{amssymb}
\usepackage{bbm}
\usepackage{wrapfig}
\usepackage{mathtools}

\newcommand{\mhyphen}{\text{-}}

\begin{document}

\begin{frontmatter}

\runtitle{Stabilizing Communication Topology}  

\title{Data-Driven Communication Topology and Distributed Controller Synthesis: Control-Aware and Co-Design\thanksref{footnoteinfo}} 

\thanks[footnoteinfo]{This paper was not presented at any IFAC 
meeting. Corresponding author M.C.A. Nestor. Tel. +447960-001-549.}

\author[Imperial]{Michael C. A. Nestor}\ead{m.nestor22@imperial.ac.uk},    
\author[Tsinghua]{Jiaxin Wang}\ead{jiaxinwangthu@gmail.com},
\author[Imperial]{Fei Teng}\ead{f.teng@imperial.ac.uk}               

\address[Imperial]{Department of Electrical and Electronic Engineering, Imperial College London, Exhibition Road, London, SW7 2AZ, UK}  
\address[Tsinghua]{Department of Electrical Engineering, Tsinghua University, Haidian District, Beijing, 100084, China}

\begin{keyword}                           
Controller constraints and structure; control under communication constraints; data-based control; control of networks; cyber physical systems               
\end{keyword}                             

\begin{abstract}                          
In distributed control schemes, the communication topology that defines information sharing between agents must be designed prior to online control execution. Previous data-driven works typically decouple the topology and controller design problems. This paper investigates data-driven approaches to topology design that guarantee the existence of a stabilizing controller, whilst trading off the level of required communication against control performance. We formulate a data-driven \emph{co-design} scheme that directly co-optimizes the topology and controller over the link costs and closed-loop control cost. When the topology designer does not have access to the agents' control objectives, we propose a \emph{control-aware} approach that couples the two design problems within a sequential scheme. In both cases, the topology is optimized within a mixed-integer semidefinite program (MISDP). Simulations including the IEEE 14-bus power system test case demonstrate the effectiveness of both co-design and control-aware schemes. Numerical examples show lower computation time for the control-aware case than for co-design.
\end{abstract}

\end{frontmatter}

\section{Introduction}
\label{sec:intro}




In the distributed control paradigm, each control agent directly exchanges information with a subset of the other agents. The \emph{communication topology} represents the structure defining information flow between agents, and can be considered a design variable.
The choice of communication topology may limit the achievable closed-loop performance or even create an infeasible structured stabilization problem \cite{Pichai-et-al-1984-SFMs}. In this paper, we consider the coupling between communication topology and controller design.
\begin{figure}[h]
   \centering
   \includegraphics[width=0.75\linewidth]{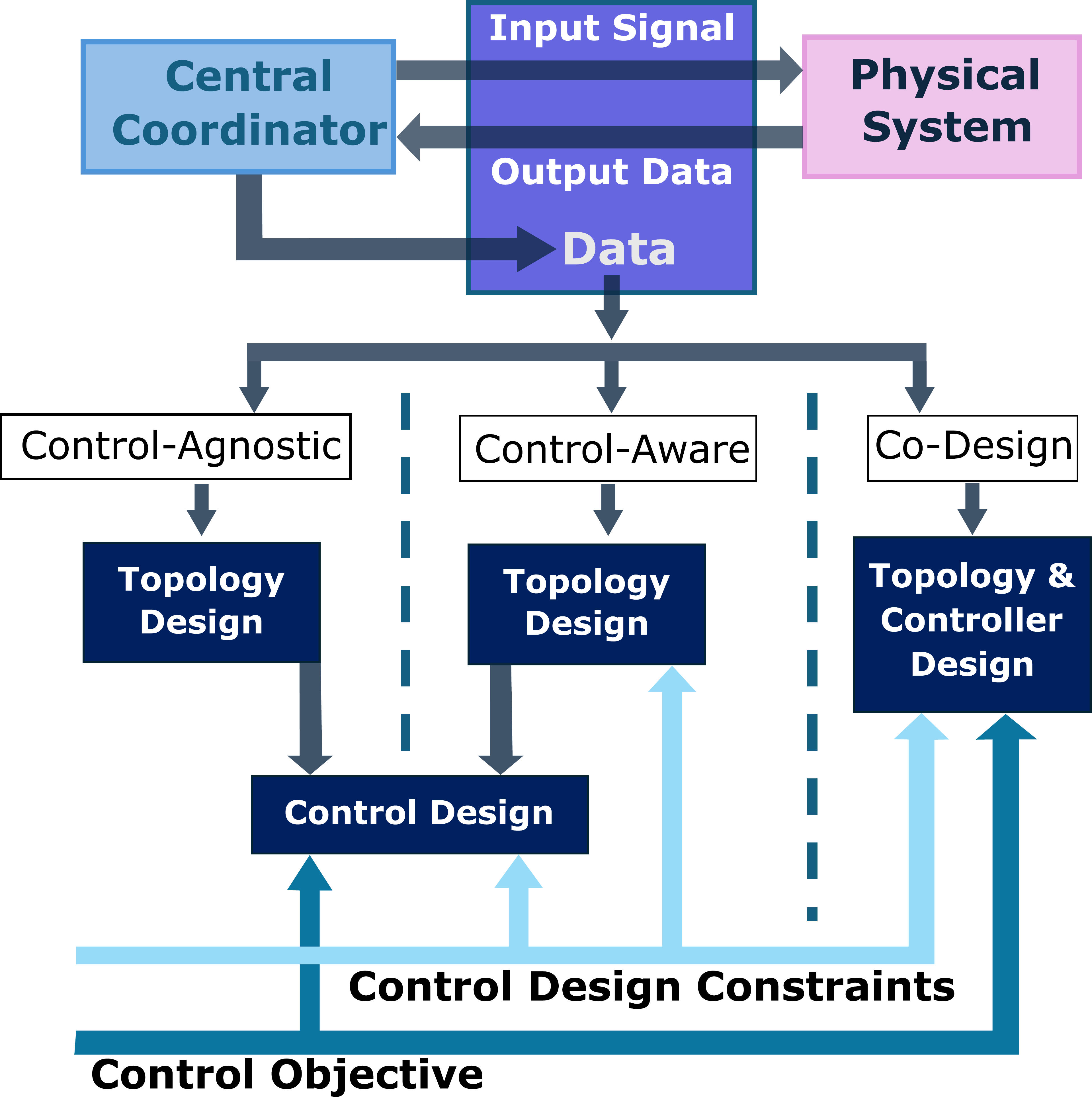}
   \caption{Our framework of control-agnostic, control-aware, and co-design approaches to communication topology design.}
   \label{fig:topology_design_framework}
\end{figure}
Our overall framework is composed of co-design and sequential design classes, as summarized in Fig. \ref{fig:topology_design_framework}. We propose data-driven \emph{co-design}, which selects communication links and controller gains together, and \emph{control-aware} topology design, which selects links using a performance proxy and constraints that preserve feasibility of the subsequent controller synthesis.

\paragraph*{Related Work}
Interconnected linear systems with unstable Decentralized Fixed Modes (DFMs) cannot be stabilized without communication \cite{Wang-Davison-1973-DFMs}, whilst a system with unstable Structural Fixed Modes (SFMs) \cite{Pichai-et-al-1984-SFMs} for a given feedback structure cannot be stabilized by a controller with that structure. 
It is therefore imperative to ensure that the communication topology does not give rise to a structure with unstable SFMs. 
Many works focus solely on designing a topology under which no SFMs exist (\cite{Pequito-2015-Min-Cost-Constrained-I-O-Control-Config,Moothedath-2018-Minimum-Cost-Feedback-Selection-Arbitrary-Pole-Placement-Structured-Systems}) or under which a stabilizing controller exists (\cite{Sturz-2017-Fixed-Mode-Elimination-by-Minimum-Communication,Mozafari-2012-Oscillator-Network-Synchronization-Distributed,Xu-2024-Comms-Topology-Optimization-Time-Delay-CPMG,Mosalli-Babazadeh-2022-Stabilizing-Control-Structures}). These are examples of a modular, sequential approach to communication and control design that first determines a stabilizing control structure, then designs the controller subject to that structure. However, control performance is not considered in the structure design, potentially leading to stabilizing but poorly performing controllers. Furthermore, no knowledge of relaxations or constraints used in controller design is incorporated in structure characterization (e.g., enforcing a block-diagonal Lyapunov matrix \cite{Sturz-et-al-2021-Distributed-Control-Design-Heterogeneous-Interconnected}). Under a supposedly stabilizing structure, it might be impossible to find a controller that actually stabilizes the system in closed-loop using a desired control method. We characterize such approaches as \emph{control-agnostic}. Further control-agnostic approaches including community detection \cite{Daoutidis-et-al-2018-Decomposing-Complex-Plants} are reviewed in the survey \cite{Chanfreut-et-al-2021-Survey-Clustering-Methods}.\par

An alternative approach is to design the communication topology and distributed controller together within a single problem, such as in mixed-integer structured control synthesis \cite{Gross-2011-Optimized-Distributed-Control-Network-Topology-Design}, sparsity-promoting algorithms \cite{Lin-2013-Design-Optimal-Sparse-Feedback-Gains-via-ADMM,Dorfler-et-al-2014-Sparsity-Promoting-Optimal-WAC,Schuler-2014-PhD-Thesis}, and coalitional MPC schemes \cite{Maestre-et-al-2014-A-Coalitional-Control-Scheme-Cooperative-Game}. Whilst control performance is likely to be high quality, these monolithic approaches may lead to computational challenges in large systems and are not applicable if the control agents do not share their cost parameters with the topology designer. As topology and controller design are combined in a single problem, we characterize these methods as \emph{co-design} approaches.
Direct data-driven control methods based on Willems' Fundamental Lemma \cite{Willems-2005-Fundamental-Lemma}, where control decisions are computed directly from data thus avoiding model identification, have gained traction recently \cite{Berberich-Allgower-Review-Data-Driven-MPC,Coulson-2019-Shallows-of-the-DeePC,De-Persis-2020-Formulas-Data-Driven-Control}. Works to-date on topology design for data-driven distributed control focus on matching the communication topology to an identified underlying physical network graph \cite{van-Waarde-2021-Topology-Identification-Networks,Steentjes-2021-H_inf-Performance-Analysis-Distributed-Controller-Synthesis-Interconnected-Linear-Sys-Noisy-Input-State-Data}. We aim to couple the topology and control design problems and allow for greater flexibility than simply mirroring the physical topology in the cyber layer.
\paragraph*{Main Contributions}
In this paper, we present a co-design scheme to synthesize a communication topology and accordingly structured \(\mathcal{H}_2\) controller directly from data. Secondly, we propose a sequential \emph{control-aware} scheme, where the communication designer applies relaxations used in control design within the topology design process, without explicitly optimizing the controller over the closed-loop control cost. A stabilizing control law is guaranteed to exist for the optimized topology assuming that the controller is designed in a fashion that is compatible with the control-aware stability condition. Both approaches account for link costs and noisy data, permit communication topologies that differ from the physical interconnection topology, and certify the existence of a compatible stabilizing controller under the stated assumptions. Our flexible design methods are in the form of mixed-integer semidefinite programs (MISDPs).

\paragraph*{Notation}
The $a \times b$ matrix of ones is denoted by $\mathbbm{1}^{a \times b}$, whilst the symbol \(\land\) denotes the logical and.
For an $N$-sample signal $\{x(k)\}_{k=1:N}$ with $x(k) \in \mathbb{R}^n$, we denote the vectorized signal $\begin{bmatrix}
    x(1)^\top & \ldots & x(N)^\top
\end{bmatrix}^\top$ by $x_{[1:N]} \in \mathbb{R}^{nN}$. To represent the vertical concatenation of vectors we write $\begin{bmatrix}
    x^\top_1 & \ldots & x^\top_N
\end{bmatrix}^\top = [x_i]_{i=1:N}$, where $x_i \in \mathbb{R}^{n_i} \ \forall i = 1,\ldots,N$. A block-diagonal matrix $A$ composed of $N$ diagonal blocks $A_1, \ldots, A_N$ is denoted $A = \operatorname{diag} \, \{A_i\}_{i=1}^N$. A matrix $A$ formed of elements or block elements $A_{ij}$, where $i \in \mathcal{I}_1$ denotes row positioning and $j \in \mathcal{I}_2$ denotes column positioning, is denoted $A = [A_{ij}]_{i \in \mathcal{I}_1}^{j \in \mathcal{I}_2}$. If $\mathcal{I}_1 = \mathcal{I}_2 = \{1,...,N\}$, we may write $[A_{ij}]_{i,j=1:N}$. 

\section{Preliminaries}
\label{sec:preliminaries}


Consider a linear time-invariant (LTI) dynamical system composed of $M \in \mathbb{Z}_+$ interconnected subsystems, each with an individual controlling agent. We denote the sets of subsystems and agents by $\mathcal{V} \coloneq \{1,\ldots,M\}$. The fixed physical interconnection topology is described by the physical-layer graph $\mathcal{G}_P \coloneq (\mathcal{V}, \mathcal{E}_P)$, where \((j,i) \in \mathcal{E}_P\) is an edge of \(\mathcal{G}_P\) if the state of subsystem \(i\) is directly influenced by subsystem \(j\). The communication topology is defined by the design-dependent cyber-layer graph \(\mathcal{G}_C \coloneq (\mathcal{V}, \mathcal{E}_C)\).
There exists an edge $(j,i) \in \mathcal{E}_C$ if agent \(i\) receives an input-output data stream from agent \(j\).
To indicate whether a link exists from agent $j$ to agent $i$, we define a Boolean variable $\delta_{ij} \in \{0,1\} \ \forall i,j \in \mathcal{V}$. The communication topology is represented by a matrix $\Delta \coloneq [\delta_{ij}]_{i,j=1:M}$. The edge set \(\mathcal{E}_C\) is parameterized by \(\Delta\):
\begin{equation}
    \label{eqn:define_delta_ij}
    \mathcal{E}_C = \{(j,i) \in \mathcal{V} \times \mathcal{V} : \delta_{ij} = 1\}
\end{equation}
We associate a non-negative scalar link cost, $c_{ij} \in \mathbb{R}_{\geq 0}$, with any possible communication link $(j,i) \in \mathcal{V} \times \mathcal{V}$ (a possible edge in \(\mathcal{G}_C\)).
The link costs may represent the bandwidth, energy or infrastructure expenditure required to enable the links and maintain the communication network at a desired quality-of-service.
The physical dynamics of subsystem $i$ are described by:
{\allowdisplaybreaks
\begin{subequations}
\label{eqn:subsys_dynamics}
\begin{align}
    \label{eqn:subsys_state_dyn}
    x_i(k+1) &= \sum_{j=1}^M {A_{ij} x_j(k)} + B_i u_i(k) + F_i \varepsilon_i(k)\\
    \label{eqn:subsys_output_eqn}
    y_i(k) &= C_i x_i(k) + D_i u_i(k)\\
    \hat{y}_i(k) &= y_i(k) + v_i(k),
    \end{align}
\end{subequations}}
where, for subsystem $i \in \mathcal{V}$ and at time instant $k$, $x_i(k) \in \mathbb{R}^{n_i}$ is the subsystem state, $u_i(k) \in \mathbb{R}^{m_i}$ is the control input, $y_i(k) \in \mathbb{R}^{p_i}$ is the true output, $\hat{y}_i(k)$ is the noisy output measurement, $\varepsilon_i(k) \in \mathbb{R}^{n_{d,i}}$ is a stochastic process disturbance and $v_i(k) \in \mathbb{R}^{p_i}$ is measurement noise. We have \(A_{ij} = \mathbf{0}^{n_i \times n_j} \; \forall (j,i) \notin \mathcal{E}_P, \ i \neq j\). By stacking the local input, state, output, disturbance and noise vectors and appropriately concatenating the dynamics matrices, the global system dynamics can be written as:
\begin{subequations}
\label{eqn:global_dynamics}
\begin{align}
    \label{eqn:global_state_dyn}
    x(k+1) &= A x(k) + B u(k) + F \varepsilon(k) \\
    \label{eqn:global_output_eqn}
    y(k) &= C x(k) + D u(k), \ \hat{y}(k) = y(k) + v(k).
\end{align}
\end{subequations}
We make the following assumption that the parameters of the state space matrices may be unknown:
\begin{assum}
    \label{assum:LTI_dynamics_unknown}
    The subsystem interconnections are described by the graph $\mathcal{G}_P$, and the pairs $(A,C)$ and $(A,B)$ are observable and controllable, respectively.
    We assume that, $\forall i \in \mathcal{V}$, the matrices $B_i$, $C_i$ and $D_i$, and $\forall i,j \in \mathcal{V}$ $A_{ij}$, in \eqref{eqn:subsys_dynamics} may be unknown. The system lag \cite{Berberich-et-al-2021-Design-of-Terminal-Ingredients}, \(\ell\), is assumed to be known, and the condition \(p \ell = n\) holds, where \(p = \sum_{i=1}^M p_i\) and \(n = \sum_{i=1}^M n_i\). 
\end{assum}


{\renewcommand{\arraystretch}{1}
Each agent $i$ determines $u_i(k)$ based on its available information. This includes local inputs and outputs, and inputs and outputs received via communication. We do not assume state measurement availability and avoid the problem of observer design by working with an extended state $\xi_i(k) \coloneq \begin{bmatrix}
    u_{i,[k-\ell:k-1]}^{\top} & \hat{y}_{i,[k-\ell:k-1]}^{\top}
\end{bmatrix}^\top$. The global extended state vector at time $k$ is given by $\xi(k) = \begin{bmatrix}
    u_{[k-\ell:k-1]}^{\top} & \hat{y}_{[k-\ell:k-1]}^{\top}
\end{bmatrix}^\top \in \mathbb{R}^{\ell (m + p)}$, where \(m = \sum_{i=1}^M m_i\). We assume linear feedback control; the global input is calculated by \(u(k) = K \xi(k)\). If Assumption \ref{assum:LTI_dynamics_unknown} holds, then in the nominal noise-free case, $x(k) \in \mathbb{R}^n$ may be reconstructed from $\xi(k)$ \cite{Koch-et-al-2022-Provably-Robust-Verification-Dissipativity}.} We write the extended state dynamics as:
\begin{equation}
    \label{eqn:extended_state_dynamics}
    \xi(k+1) = \tilde{A} \xi(k) + \tilde{B}u(k) + E_d \tilde{d}(k),
\end{equation}
where $\tilde{d}(k) \in \mathbb{R}^{p}$ is a disturbance that can be constructed from \eqref{eqn:global_dynamics} and \(\varepsilon_{[k-\ell:k-1]}\), \(v_{[k-\ell:k]}\), \(E_d = \begin{bmatrix}
    0 & I
\end{bmatrix}^\top\), and \(\tilde{A}\), \(\tilde{B}\) can be constructed from \eqref{eqn:global_dynamics} -- see \cite{Berberich-et-al-2021-Design-of-Terminal-Ingredients,Koch-et-al-2022-Provably-Robust-Verification-Dissipativity} for details.
We separate the known and unknown dynamics: \(\tilde{A} = A' + \bar{A}\), \(\tilde{B} = B' + \bar{B}\), where \(A'\) and \(B'\) are the known parts of \(\tilde{A}\) and \(\tilde{B}\).
This yields a linear fractional transformation (LFT) representation \cite{Berberich-et-al-2023-Prior-Knowledge-Data-Robust-Design}:
{\allowdisplaybreaks
{\renewcommand{\arraystretch}{1}
\begin{subequations}
\label{eqn:LFT_extended_state}
    \begin{align}
    \label{eqn:LFT_dynamics}
        \left[
        \begin{array}{c}
            \xi(k+1) \\
            \hline e(k) \\ z(k)
        \end{array}
        \right] &= \left[\begin{array}{c|c c c}
            A' & B' & E_d & B_w \\
            \hline C_e & D_{e} & 0 & 0 \\ C_z & D_z & 0 & 0
        \end{array}
        \right] \left[
        \begin{array}{c}
            \xi(k) \\
            \hline u(k) \\ \tilde{d}(k) \\ w(k)
        \end{array}
        \right] \\
        w(k) &= \Psi_{\mathrm{tr}} z(k),
    \end{align}
\end{subequations}}}

where \(w(k) \in \mathbb{R}^{n_w}\) and \(z(k) \in \mathbb{R}^{n_z}\) represent an uncertainty channel, and \(\Psi_{\mathrm{tr}}\) is the true value of the uncertainty (i.e., \(B_w \Psi_\mathrm{tr} (C_z \xi(k) + D_z u(k)) = \bar{A} \xi(k) + \bar{B} u(k)\)). The LFT representation consists of the LTI system \eqref{eqn:extended_state_dynamics} with performance signal \(e(k)\) in feedback with the uncertainty \(\Psi_{\mathrm{tr}}\). The known matrices \(B_w\), \(C_z\) and \(D_z\) describe how the uncertainty enters the dynamics.
To deal with the unknown dynamics, we take a data-driven approach where we utilize a recorded input-output \(T\)-sample dataset. We denote an input signal $\{u^d(k)\}_{k=0:T-1}$ and the corresponding measured output signal $\hat y_{[0:T-1]}^{d} \in \mathbb{R}^{pT}$. Define data matrices as follows:
\begin{subequations}
\begin{align}
    \Xi \coloneq& \begin{bmatrix}
    \xi^{d}(\ell) & \ldots & \xi^{d}(T-1)
\end{bmatrix} \\
\Xi_+ \coloneq& \begin{bmatrix}
    \xi^{d}(\ell+1) & \ldots & \xi^{d}(T)
\end{bmatrix} \\
U \coloneq& \begin{bmatrix}
    u^{d}(\ell) & \ldots & u^{d}(T-1)
\end{bmatrix} \\
Z \coloneq& C_z \Xi + D_z U, \quad W \coloneq \Xi_+ - A' \Xi - B' U\\
\tilde{D}_\mathrm{tr} \coloneq& \begin{bmatrix}
    \tilde{d}^{d}(\ell) & \ldots & \tilde{d}^{d}(T-1)
\end{bmatrix},
\end{align}
\end{subequations}
where \(\tilde{D}_\mathrm{tr}\) is an unknown disturbance matrix.

\section{Problem Formulation}
\label{section:framework}

\subsection{Co-Design of Topology and Controller}
\label{subsec:co-design_problem_formulation}

We can describe the overall communication cost of a topology \(\Delta\) by \(\sum_{i,j=1}^M c_{ij}\delta_{ij}\); \(c_{ij}\) is added to the cost if there is a link from agent \(j\) to agent \(i\). We denote a control cost function by \(J(\cdot)\). By minimizing a combined cost subject to stability, structural and performance constraints, we form a co-design problem, \(\mathcal{P}_{\mathrm{Co}\mhyphen\mathrm{Des}}\), enabling a trade-off between communication and control costs:
\allowdisplaybreaks{
\begin{subequations}
\label{eqn:comms_control_co-design}
    \begin{align}
    \mathcal{P}_{\mathrm{Co}\mhyphen\mathrm{Des}}: \quad &\min_{P \succ 0, K, \Delta} J(P,K) + \sum_{i,j=1}^M c_{ij}\delta_{ij} \\
        \label{eqn:co-des_stability_constraint}
        \mathrm{s.t.} \ & \mathcal{C}_\mathrm{stab}(P,K,\mathcal{D},\mathcal{M}) \\
        \label{eqn:co-des_structure_constraint}
        & \mathcal{C}_\mathrm{str}(P,K,\Delta) \\
        \label{eqn:co-des_perf_constraint}
        & \mathcal{C}_\mathrm{perf}(P,K,\mathcal{D},\mathcal{M}) \\
    \delta_{ij} \in \{0,1\}  &\ \forall i,j \in \mathcal{V}, \quad \Delta = [\delta_{ij}]_{i,j=1:M}
    \end{align}
\end{subequations}}

We optimize over the controller \(K\), Lyapunov matrix \(P\), and communication topology \(\Delta\). The problem is parameterized by the dataset \(\mathcal{D}\) and available model knowledge \(\mathcal{M} \coloneq \{A',B'\}\). One may seek to optimize the control performance subject to a communication budget \(\bar{c}\), which can be formulated by removing \(\sum_{i,j=1}^M c_{ij}\delta_{ij}\) from the cost function and adding a constraint \(\sum_{i,j=1}^M c_{ij}\delta_{ij} \leq \bar{c}\). We require a data-driven stability constraint \(\mathcal{C}_\mathrm{stab}(P,K,\mathcal{D},\mathcal{M})\) \eqref{eqn:co-des_stability_constraint} to ensure closed-loop stability, and we allow for performance constraints \(\mathcal{C}_\mathrm{perf}(P,K,\mathcal{D},\mathcal{M})\) \eqref{eqn:co-des_perf_constraint} to guarantee an upper bound on, for example, control performance as measured by the closed-loop \(\mathcal{H}_2\)-norm. The structure constraint \(\mathcal{C}_\mathrm{str}(P,K,\Delta)\) \eqref{eqn:co-des_structure_constraint} ensures that the controller has a structure consistent with the communication topology \(\Delta\). We define selector matrices \(E_{u,i}, \,E_{\xi,i} \ \forall i \in \mathcal{V}\) such that \(u_i(k) = E_{u,i} u(k)\) and \(\xi_i(k) = E_{\xi,i} \xi(k)\). The sub-controller corresponding to agent \(j\)'s contribution to \(u_i(k)\) is given by \(K_{ij} = E_{u,i} K E_{\xi,j}^\top\). We require \(K_{ij} = 0\) if \((j,i) \notin \mathcal{E}_C\), which means that agent \(i\) does not use information from agent \(j\) in calculating its control decision if \((j,i) \notin \mathcal{E}_C\).
Using \eqref{eqn:define_delta_ij}, this can be expressed as:
\begin{align}
\label{eqn:struc_implies_constraint_def}
    \delta_{ij} = 0 \implies K_{ij} = E_{u,i} K E_{\xi,j}^\top = 0 \ \forall i,j \in \mathcal{V}.
\end{align}
Any useful controller will stabilize the system in closed-loop. 
Input-to-state stability of the extended state dynamics \eqref{eqn:extended_state_dynamics} with respect to $\tilde d$, under $u(k)=K\xi(k)$, is guaranteed by the existence of a $P\succ0$ satisfying \cite{Scherer-Weiland-2011-LMIs-in-Control}
\begin{equation}
    \label{eqn:Lyap_stability_condition}
    (\tilde A + \tilde BK) P^{-1} (\tilde A + \tilde BK)^\top - P^{-1} + E_d E_d^\top \prec 0.
\end{equation}
The stability constraint \(\mathcal{C}_\mathrm{stab}(P,K,\mathcal{D},\mathcal{M})\) must be such that satisfaction of \(\mathcal{C}_\mathrm{stab}(P,K,\mathcal{D}, \mathcal{M}) \implies \eqref{eqn:Lyap_stability_condition}\) holds, and is desirably convex. Similarly, we require a (convex) \(\mathcal{C}_\mathrm{str}(P,K,\Delta)\) such that satisfaction of \(\mathcal{C}_\mathrm{str}(P,K,\Delta) \implies \eqref{eqn:struc_implies_constraint_def}\) holds. The joint constraints \eqref{eqn:struc_implies_constraint_def} and \eqref{eqn:Lyap_stability_condition} are non-convex in \(K\) and \(P\); different relaxations will lead to different feasible sets in the relaxed problems. This warrants careful consideration with regard to the structured stabilization problem feasibility and hence the presence of unstable SFMs. We propose the generalization of SFMs to \emph{design-dependent SFMs}.
\begin{rem}
\label{rem:design-depend-SFMs}
    Unstable design-dependent SFMs for a given \(\Delta\) denote infeasibility of the chosen sufficient synthesis conditions under the feedback structure associated with \(\Delta\). However, absence of unstable classic SFMs for the plant and feedback structure is necessary, but not sufficient, for structured synthesis feasibility.
\end{rem}

\subsection{Sequential Topology and Controller Design}

In a sequential approach, the communication topology is designed first, before the controller is synthesized subject to the topology structure. 
We propose to couple the control synthesis and topology design problems within a sequential control-aware scheme. We use compatible stability and structural constraints in both stages, so that every feasible topology design solution supplies a feasible controller for the subsequent controller synthesis stage.
The control-aware communication design problem \(\mathcal{P}_{\mathrm{Ctrl}\mhyphen\mathrm{Aw}}\), including an optional \(\mathcal{C}_\mathrm{perf}\), is given by:
\begin{subequations}
\label{eqn:ctrl_aw_framework_optim_formulation}
    \begin{align}
        &\mathcal{P}_{\mathrm{Ctrl}\mhyphen\mathrm{Aw}}: \ \ \min_{P \succ 0,K,\Delta} g(P,K,\mathcal{D},\Delta) + \sum_{i,j=1}^M c_{ij}\delta_{ij} \\
        & \qquad \mathrm{s.t.} \ \mathcal{C}_\mathrm{perf}(P,K,\mathcal{D},\mathcal{M}) \\
        \label{eqn:P_SDS_def}
         &\mathcal{P}_{\mathrm{SDS}}:\begin{cases}
            \mathcal{C}_\mathrm{stab}(P,K,\mathcal{D},\mathcal{M}) \\ 
            \mathcal{C}_\mathrm{str}(P,K,\Delta) \\
            \delta_{ij} \in \{0,1\}  \ \forall i,j \in \mathcal{V}, \ \Delta = [\delta_{ij}]_{i,j=1:M}.
        \end{cases}
    \end{align}
\end{subequations}
Considering the feasibility of stabilization, we can describe a structure design and stabilization problem as \(\mathcal{P}_{\mathrm{SDS}}\), where we aim to find a \(P \succ 0, K, \Delta\) that satisfy \eqref{eqn:P_SDS_def}.
Any \(\Delta_\mathrm{feas}\) solving \(\mathcal{P}_\mathrm{SDS}\) represents a communication topology with zero unstable design-dependent SFMs, and any \(K_\mathrm{feas}\) solving \(\mathcal{P}_\mathrm{SDS}\) stabilizes the system dynamics \eqref{eqn:global_dynamics}. We let \(g(\cdot)\) be a control cost (proxy) function in the relaxed variables representing the communication designer's control goal. By construction of the topology design constraints, the subsequent control synthesis problem is guaranteed to be feasible.

\begin{rem}
    If the objective \(g(P, K, \mathcal{D},\Delta) = J(P,K)\), the communication design stage reduces to \(\mathcal{P}_{\mathrm{Co}\mhyphen\mathrm{Des}}\) \eqref{eqn:comms_control_co-design}. Alternatively, we can choose a different cost function \(g(\cdot)\) to reflect that the communication designer has a different objective than the control designer, does not know the true control cost so uses a proxy function, or simplifies the objective to reduce the computational burden.
\end{rem}

\section{Stability and Structure Constraints}
\label{section:constraint_formulation}

We formulate a stability constraint based on a data-driven control approach \cite{Berberich-et-al-2023-Prior-Knowledge-Data-Robust-Design} and utilize structural constraints in \cite{Gross-2011-Optimized-Distributed-Control-Network-Topology-Design} and \cite{Ferrante-et-al-2020-Design-of-Structured-Stabilizers}. We assume that the unknown disturbance matrix \(\tilde{D}_\mathrm{tr}\) is bounded by some known set:

\begin{assum}
    \label{assum:Berberich_noise_bounds}
    {\renewcommand{\arraystretch}{1}\cite{Berberich-et-al-2023-Prior-Knowledge-Data-Robust-Design} A convex cone of symmetric matrices admitting an LMI representation \(\mathbf{P}_d\) is known, and is such that the disturbance matrix satisfies \(\tilde{D}_\mathrm{tr} \in \mathbf{\tilde{D}}\), where 
    \begin{equation}
        \mathbf{\tilde{D}} \coloneq \left\{ \tilde{D} \, \left| \, \begin{bmatrix}
        \tilde{D}^\top \\ I
    \end{bmatrix}^\top P_d \begin{bmatrix}
        \tilde{D}^\top \\ I
    \end{bmatrix} \succeq 0 \ \forall P_d \in \mathbf{P}_d  \right. \right\}.
    \end{equation}}
\end{assum}

We assume that a multiplier combining prior knowledge, the disturbance bound, and information embedded within the dataset is known, which bounds an uncertainty set \(\mathbf{\tilde{\Psi}}_\mathrm{com}\) containing \(B_w \Psi_\mathrm{tr}\).

\begin{assum}
\label{assum:combined_uncertainty_set}
    {\renewcommand{\arraystretch}{1}\cite{Berberich-et-al-2023-Prior-Knowledge-Data-Robust-Design} An uncertainty set is known and is such that \(B_w \Psi_\mathrm{tr} \in \mathbf{\tilde{\Psi}}_\mathrm{com}\). The set admits a multiplier description: 
    \begin{equation}
        \mathbf{\tilde{\Psi}}_\mathrm{com} \coloneq \Bigg\{ \tilde{\Psi} \, \left| \, \begin{bmatrix}
        \tilde{\Psi}^\top \\ I
    \end{bmatrix}^\top \tilde{P}_\mathrm{com} \begin{bmatrix}
        \tilde{\Psi}^\top \\ I
    \end{bmatrix} \succeq 0 \right. \forall \tilde{P}_\mathrm{com} \in \mathbf{\tilde{P}}_\mathrm{com} \Bigg\},
    \end{equation}
    where \(\mathbf{\tilde{P}}_\mathrm{com}\) is a known convex cone of symmetric matrices admitting an LMI representation. Furthermore, \(\mathbf{\tilde{\Psi}}_\mathrm{com} = \mathbf{\tilde{\Psi}}_\mathrm{pr} \cap \mathbf{\tilde{\Psi}}_\mathrm{learnt}\), where \(\mathbf{\tilde{\Psi}}_\mathrm{pr}\) is an a-priori known set given by prior system knowledge (e.g., a norm bound), and \(\mathbf{\tilde{\Psi}}_\mathrm{learnt}\) is inferred from data. In particular, \(\mathbf{\tilde{\Psi}}_\mathrm{learnt} \coloneq \left\{ \tilde{\Psi} \, \left| \, \begin{bmatrix}
        \tilde{\Psi}^\top \\ I
    \end{bmatrix}^\top \tilde{P}_d \begin{bmatrix}
        \tilde{\Psi}^\top \\ I
    \end{bmatrix} \succeq 0 \ \forall \tilde{P}_d \in \mathbf{\tilde{P}}_d  \right. \right\}\), where \(\mathbf{\tilde{P}}_d \coloneq \begin{bmatrix}
        -Z^\top & W^\top \\
        0 & E_d^\top
    \end{bmatrix}^\top \mathbf{P}_d \begin{bmatrix}
        -Z^\top & W^\top \\
        0 & E_d^\top
    \end{bmatrix}\).
    }
\end{assum}

We refer to \cite{Berberich-et-al-2023-Prior-Knowledge-Data-Robust-Design} for details on how to construct \(\mathbf{P}_d\) and \(\mathbf{\tilde{P}}_\mathrm{com}\) from possible disturbance descriptions and descriptions of prior knowledge on the uncertainty.

\subsection{Data-Driven Stability Constraint (\(\mathcal{C}_\mathrm{stab}\), \eqref{eqn:co-des_stability_constraint})}
\label{subsec:data-driven-stab-constraint}



Standard control synthesis LMIs use a parameterization of the form \(KX = L\) to convexify the problem, where \(X = P^{-1} \succ 0\) is the inverse of the Lyapunov matrix. To decouple the controller parameterization from the Lyapunov matrix and avoid indirectly imposing sparsity constraints on \(P\) within \(\mathcal{C}_\mathrm{str}\), we apply an extended stability parameterization proposed in \cite{De-Oliveira-2002-Extended-H2-H-inf-Norm-Characterizations-Controller-Parameterizations} and instead use \(KG = L\), where \(G + G^\top - X \succ 0\), thus enlarging the feasible set of the design problem.

\begin{thm}
\label{theorem:extended_data-driven_stability}
    Consider the extended state dynamics \eqref{eqn:extended_state_dynamics} and suppose Assumptions \ref{assum:LTI_dynamics_unknown}, \ref{assum:Berberich_noise_bounds} and \ref{assum:combined_uncertainty_set} hold. Then the following LMI, in the variables \(X \succ 0\), \(\tilde{P}_\mathrm{com} \in \mathbf{\tilde{P}}_\mathrm{com}\), \(L\) and \(G\), represents a valid stability constraint \(\mathcal{C}_\mathrm{stab}\) (cf. \eqref{eqn:co-des_stability_constraint}) for topology co-design:
\begin{equation}
\label{eqn:extended_stability_data_driven_LMI}
\begin{bmatrix}
		        {\renewcommand{\arraystretch}{1}\begin{bmatrix}
            E_d E_d^\top - X & 0 \\ 0 & 0
        \end{bmatrix} + \begin{bmatrix}
            0 & I \\ I & 0
        \end{bmatrix}^\top \tilde{P}_\mathrm{com} \begin{bmatrix}
            0 & I \\ I & 0
        \end{bmatrix}} & \star \\
        {\renewcommand{\arraystretch}{1}\begin{bmatrix}
            (A' G + B'L)^\top & (C_z G + D_z L)^\top
        \end{bmatrix}} & -\chi
\end{bmatrix} \prec 0,
\end{equation}
where \( \chi = (G + G^\top - X) \). If a feasible solution to \eqref{eqn:extended_stability_data_driven_LMI} exists, then \(P = X^{-1}\) and a controller \(K = LG^{-1}\) satisfy \eqref{eqn:Lyap_stability_condition}, and the dynamics \eqref{eqn:extended_state_dynamics} are input-to-state stable with respect to \(\tilde{d}\) under the feedback control law \(u(k) = K \xi(k)\).
\end{thm}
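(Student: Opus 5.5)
The plan is to reduce \eqref{eqn:extended_stability_data_driven_LMI} to the Lyapunov inequality \eqref{eqn:Lyap_stability_condition} in three stages. First remove the slack variable \(G\) using the extended-LMI argument of \cite{De-Oliveira-2002-Extended-H2-H-inf-Norm-Characterizations-Controller-Parameterizations}. Then apply a Schur complement. Finally, eliminate the uncertainty channel with the multiplier of Assumption \ref{assum:combined_uncertainty_set}, as in the full-block S-procedure argument of \cite{Berberich-et-al-2023-Prior-Knowledge-Data-Robust-Design}.

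\emph{Step 1 (invertibility of \(G\) and removal of \(\chi\)).} The lower-right block of \eqref{eqn:extended_stability_data_driven_LMI} gives \(\chi = G + G^\top - X \succ 0\). Since \(X \succ 0\), this implies \(G + G^\top \succ 0\), so \(G\) is nonsingular and \(K = LG^{-1}\) is well defined. Substituting \(L = KG\) turns the off-diagonal block into \(\begin{bmatrix} (A_K G)^\top & (C_K G)^\top \end{bmatrix}\), where \(A_K \coloneq A' + B'K\) and \(C_K \coloneq C_z + D_z K\). Next, \((G - X)^\top X^{-1} (G - X) \succeq 0\) expands to \(G^\top X^{-1} G \succeq G + G^\top - X\), i.e. \(-G^\top P G \preceq -\chi\). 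Replacing \(-\chi\) by \(-G^\top P G\) therefore only decreases the left-hand side, so strict negativity is preserved. A congruence with \(\operatorname{diag}\{I, G^{-1}\}\) then yields an LMI with lower-right block \(-P\) and off-diagonal block \(\begin{bmatrix} A_K^\top & C_K^\top \end{bmatrix}\).

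\emph{Step 2 (Schur complement).} Because \(-P \prec 0\), a Schur complement gives
\[
\mathcal{N} \coloneq \begin{bmatrix} E_d E_d^\top - X & 0 \\ 0 & 0 \end{bmatrix} + \begin{bmatrix} 0 & I \\ I & 0 \end{bmatrix}^\top \tilde{P}_\mathrm{com} \begin{bmatrix} 0 & I \\ I & 0 \end{bmatrix} + \begin{bmatrix} A_K \\ C_K \end{bmatrix} X \begin{bmatrix} A_K \\ C_K \end{bmatrix}^\top \prec 0 .
\]

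\emph{Step 3 (uncertainty elimination).} Set \(\tilde{\Psi} = B_w \Psi_\mathrm{tr}\). By the LFT identity, \(\tilde{A} + \tilde{B}K = A_K + \tilde{\Psi} C_K\). Pre- and post-multiply \(\mathcal{N}\) by the full-row-rank matrix \(\begin{bmatrix} I & \tilde{\Psi} \end{bmatrix}\) and its transpose. This strict congruence preserves negativity and yields
\[
(\tilde{A} + \tilde{B}K) X (\tilde{A} + \tilde{B}K)^\top + E_d E_d^\top - X + \begin{bmatrix} \tilde{\Psi}^\top \\ I \end{bmatrix}^\top \tilde{P}_\mathrm{com} \begin{bmatrix} \tilde{\Psi}^\top \\ I \end{bmatrix} \prec 0 .
\]
Here the swap matrix maps \(\begin{bmatrix} I & \tilde{\Psi} \end{bmatrix}^\top\) to \(\begin{bmatrix} \tilde{\Psi}^\top \\ I \end{bmatrix}\). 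By Assumption \ref{assum:combined_uncertainty_set}, \(\tilde{\Psi} \in \mathbf{\tilde{\Psi}}_\mathrm{com}\), so the multiplier term is \(\succeq 0\) for the chosen \(\tilde{P}_\mathrm{com} \in \mathbf{\tilde{P}}_\mathrm{com}\). Dropping it leaves exactly \eqref{eqn:Lyap_stability_condition} with \(P^{-1} = X\). Input-to-state stability then follows from the cited Lyapunov result \cite{Scherer-Weiland-2011-LMIs-in-Control}. Assumptions \ref{assum:LTI_dynamics_unknown} and \ref{assum:Berberich_noise_bounds} enter only through the validity of the extended-state representation \eqref{eqn:extended_state_dynamics} and the construction of \(\mathbf{\tilde{P}}_\mathrm{com}\).

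The main obstacle is Step 3. It requires careful bookkeeping so that the swap matrix, the transposes in the multiplier description, and the block dimensions (\(n_\xi\) versus \(n_z\)) all line up. In particular, one must check that the conjugation by \(\begin{bmatrix} I & \tilde{\Psi} \end{bmatrix}\) produces precisely the quadratic form in \(\begin{bmatrix} \tilde{\Psi}^\top \\ I \end{bmatrix}\) that is certified nonnegative. Steps 1 and 2 are routine.
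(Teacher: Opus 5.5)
Your proof is correct, but after the common first step it takes a different route from the paper's. The shared start is this: $\chi \succ 0$ forces $G+G^\top \succ 0$, so $G$ is invertible, and $G^\top X^{-1} G \succeq \chi$ lets you replace $-\chi$ by $-G^\top X^{-1} G$.

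The paper then applies the congruence $\operatorname{diag}(I, G^{-1}X)$. This yields exactly LMI (23) of Berberich et al.\ in the variables $(X, \bar L, \tilde P_\mathrm{com})$ with $\bar L = L G^{-1} X$, and the paper invokes \cite[Thm.~1]{Berberich-et-al-2023-Prior-Knowledge-Data-Robust-Design} as a black box to get the robust Lyapunov inequality. It also proves the converse by taking $G = X$ and $L = \bar L$. So, absent structural constraints, the extended LMI is equivalent to the standard one.

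You instead use the congruence $\operatorname{diag}(I, G^{-1})$, which lands directly on the closed-loop matrices $A_K$, $C_K$. You then take a Schur complement and carry out the multiplier elimination yourself by multiplying with $\begin{bmatrix} I & \tilde\Psi \end{bmatrix}$ and its transpose. This is the elementary (sufficiency) direction of the full-block S-procedure that underlies the cited theorem. Your bookkeeping of the swap matrix and of the $n_\xi$ versus $n_z$ block dimensions checks out.

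What your version buys: it is self-contained. It makes clear that Assumption~\ref{assum:combined_uncertainty_set} enters only through two facts: the LFT identity $\tilde A + \tilde B K = A_K + B_w \Psi_\mathrm{tr} C_K$, and the membership $B_w \Psi_\mathrm{tr} \in \mathbf{\tilde{\Psi}}_\mathrm{com}$. It also shows the Lyapunov inequality holds for every element of that set.

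What the paper's version buys: it is shorter, and it additionally certifies that the slack variable $G$ costs nothing relative to the standard parameterization. That fact is not part of the statement, so its absence from your proof is not a gap.
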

\begin{pf}
We prove that \eqref{eqn:extended_stability_data_driven_LMI} is feasible if and only if the stability LMI associated with \cite[Thm. 1]{Berberich-et-al-2023-Prior-Knowledge-Data-Robust-Design} is feasible, which directly proves our result. To prove sufficiency, consider that \(G^\top X^{-1} G \succeq G + G^\top - X \) since \(X^{-1} \succ 0\) \cite{De-Oliveira-2002-Extended-H2-H-inf-Norm-Characterizations-Controller-Parameterizations}. Therefore, replacing \(-\chi\) in \eqref{eqn:extended_stability_data_driven_LMI} by \(-G^\top X^{-1}G\) preserves negative definiteness. Since \(\chi\succ0\) for feasibility, \(G+G^\top=\chi+X\succ0\), which implies that \(G\) is nonsingular. Pre- and post-multiplying by \(\operatorname{diag} \,(I,(G^{-1}X)^\top)\) and \(\operatorname{diag} \,(I,G^{-1}X)\), respectively, then yields:
\begin{equation}
    \label{eqn:berberich_stability_LMI}
    \begin{bmatrix}
        {\renewcommand{\arraystretch}{1}\begin{bmatrix}
            E_d E_d^\top - X & 0 \\ 0 & 0
        \end{bmatrix} + \begin{bmatrix}
            0 & I \\ I & 0
        \end{bmatrix}^\top \tilde{P}_\mathrm{com} \begin{bmatrix}
            0 & I \\ I & 0
        \end{bmatrix}} & \star \\
        {\renewcommand{\arraystretch}{1}\begin{bmatrix}
            X A'^\top + \bar L^\top B'^\top & X C_z^\top + \bar L^\top D_z^\top
        \end{bmatrix}} & -X
    \end{bmatrix} \prec 0,
\end{equation}
where \(\bar L=LG^{-1}X\), which is equivalent to (23) in \cite{Berberich-et-al-2023-Prior-Knowledge-Data-Robust-Design}. The recovered controller is \(\bar L X^{-1}=LG^{-1} = K\). 
Conversely, given a feasible solution \((X,\bar L,\tilde P_{\mathrm{com}})\) of \eqref{eqn:berberich_stability_LMI}, choose \(G=X\) and \(L=\bar L\). Then \(\chi=X\), and \eqref{eqn:extended_stability_data_driven_LMI} coincides with \eqref{eqn:berberich_stability_LMI}, thus proving necessity.
Suppose Assumptions \ref{assum:LTI_dynamics_unknown}, \ref{assum:Berberich_noise_bounds} and \ref{assum:combined_uncertainty_set} hold. Then by \cite[Thm. 1]{Berberich-et-al-2023-Prior-Knowledge-Data-Robust-Design}, a feasible solution to \eqref{eqn:berberich_stability_LMI} 
implies that \(
(\tilde A+\tilde BK)X(\tilde A+\tilde BK)^\top
-X+E_dE_d^\top\prec0
\) for every system in the specified uncertainty set, where
$K=\bar L X^{-1}=LG^{-1}$.
Setting $P=X^{-1}$ yields
\eqref{eqn:Lyap_stability_condition}.
Thus, feasibility of
\eqref{eqn:extended_stability_data_driven_LMI}
guarantees input-to-state stability of the extended-state
closed-loop dynamics with respect to $\tilde d$ and establishes
a valid stability constraint $\mathcal C_{\mathrm{stab}}$.
\qed
\end{pf}


\begin{rem}
    The condition \(p \ell = n\) is necessary to apply the extended state-based control design approach we employ \cite{Berberich-et-al-2021-Design-of-Terminal-Ingredients}; in general this condition is restrictive since $p \ell \geq n$. Recent works \cite{Alsalti-et-al-2023-Non-minimal-state,Li-2026-Controller-Synthesis-Noisy-I-O-Data} aim to relax this restriction. Note that \cite{Alsalti-et-al-2023-Non-minimal-state} does not consider system structure, whilst \cite{Li-2026-Controller-Synthesis-Noisy-I-O-Data} may be more capable of retaining system structure, but does not provide a general procedure for constructing an augmentation satisfying the required conditions.
\end{rem}

\subsection{Structure Constraint Formulation (\(\mathcal{C}_\mathrm{str}\), \eqref{eqn:co-des_structure_constraint})}
\label{subsec:structure-constraint}


To impose the desired structure on \(K\), we require conditions on \(L\) and \(G\) such that \eqref{eqn:struc_implies_constraint_def} holds for \(K = LG^{-1}\). There is no known convex formulation to exactly express this unless possibly restrictive assumptions hold such as quadratic invariance \cite{Rotkowitz-Lall-2006-Characterization-Convex-Problems-Decentralized-Control} or chordal sparsity \cite{Watanabe-et-al-2024-Convex-Reformulation-LMI-Based-Distributed-Controller-Design-Class-of-Non-Block-Diagonal-Lyapunov-Functions}. 
Approaches given in \cite{Gross-2011-Optimized-Distributed-Control-Network-Topology-Design,Ferrante-et-al-2020-Design-of-Structured-Stabilizers} provide sufficient conditions on \(L\) and \(G\), allowing for general sparsity patterns at the cost of conservatism.  
In fact, the conditions given in these two contributions are equivalent. The proof is not included here due to space restrictions. We choose to use the approach in \cite{Gross-2011-Optimized-Distributed-Control-Network-Topology-Design} rather than \cite{Ferrante-et-al-2020-Design-of-Structured-Stabilizers} as no additional optimization variables are introduced. Since each control agent can access local information without requiring communication, we set \(\delta_{ii} = 1 \ \forall i \in \mathcal{V}\). The sufficient conditions in \cite[Thm. 2]{Gross-2011-Optimized-Distributed-Control-Network-Topology-Design} then reduce to:
{\allowdisplaybreaks
\begin{subequations}
\label{eqn:Gross-structure-constraints}
    \begin{align}
    \label{eqn:gross_M_mat_structure_constraint}
         \delta_{ij} = 0 &\implies E_{u,i} L E_{\xi,j}^\top = 0 \ \forall i,j
    \\
    \label{eqn:gross_X_mat_structure_constraint}
        \delta_{ij} = 0 & \ \land \ \delta_{iz} = 1 \implies E_{\xi,z} G E_{\xi,j}^\top = 0 \ \forall i,j,z.
    \end{align}
\end{subequations}}

Since satisfaction of \eqref{eqn:Gross-structure-constraints} \(\implies\) \eqref{eqn:struc_implies_constraint_def} holds, we take \eqref{eqn:Gross-structure-constraints} as our structural constraint \(\mathcal{C}_\mathrm{str}\).

\section{Data-Driven Topology Design Schemes}
\label{section:co-design}


\subsection{Co-Design of Topology and \(\mathcal{H}_2\) Controller}

The control and communication co-design approach may be desirable if the topology designer knows the system-wide control goal and has sufficient computational resources to explicitly optimize over the closed-loop control cost. Our formulation represents an extension of \cite{Miller-2025-Data-Driven-Structured-Robust-Control} to topology design and solves the co-design problem \(\mathcal{P}_{\mathrm{Co}\mhyphen\mathrm{Des}}\) \eqref{eqn:comms_control_co-design}. 
We find a performance bound on the \(\mathcal{H}_2\)-norm of the transfer function from \(\tilde{d}(k)\) to the performance signal \(e(k) = C_e \xi(k) + D_e u(k)\):
\begin{lem}
\label{lemma:berberich_H2_perf}
    Consider the extended state dynamics \eqref{eqn:extended_state_dynamics} and performance signal \(e(k) = C_e \xi(k) + D_e u(k)\). Suppose that Assumptions \ref{assum:LTI_dynamics_unknown}, \ref{assum:Berberich_noise_bounds} and \ref{assum:combined_uncertainty_set} hold, and a feasible solution to \eqref{eqn:extended_stability_data_driven_LMI} exists. Then the closed-loop \(\mathcal{H}_2\)-norm of \(\tilde{d} \mapsto e\) when \(u(k) = K\xi(k)\) is upper-bounded by \(\gamma_\mathrm{max}\) if there exist a \(\Gamma, X \succ 0\), \(G\) and \(L\) such that \(\operatorname{trace} \, (\Gamma) < \gamma_\mathrm{max}^2\) and the following LMI holds:
    \begin{equation}
    \label{eqn:extended_H2_performance_LMI}
        {\renewcommand{\arraystretch}{1}\begin{bmatrix}
            \Gamma & C_e G + D_e L \\ \star & G + G^\top - X
        \end{bmatrix}} \succ 0,
    \end{equation}
    with \(K = LG^{-1}\), and \(X\), \(G\) and \(L\) solving \eqref{eqn:extended_stability_data_driven_LMI}.
\end{lem}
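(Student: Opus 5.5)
The plan is to reduce Lemma~\ref{lemma:berberich_H2_perf} to the standard (non-extended) $\mathcal{H}_2$ bound in terms of $X=P^{-1}$, and then use the Lyapunov-type inequality from Theorem~\ref{theorem:extended_data-driven_stability}.

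\emph{Step 1: remove $G$ from the performance LMI.} Since $X\succ0$, the inequality $G^\top X^{-1}G\succeq G+G^\top-X$ holds, as already used in the proof of Theorem~\ref{theorem:extended_data-driven_stability}. Replacing the lower-right block of \eqref{eqn:extended_H2_performance_LMI} by $G^\top X^{-1}G$ therefore preserves positive definiteness. Feasibility of \eqref{eqn:extended_stability_data_driven_LMI} gives $\chi\succ0$, hence $G+G^\top\succ0$, so $G$ is nonsingular.

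\emph{Step 2: congruence.} Apply the congruence $\operatorname{diag}(I,G^{-1}X)$ to the modified LMI. With $K=LG^{-1}$ we have $(C_eG+D_eL)G^{-1}X=(C_e+D_eK)X$, and the lower-right block becomes $X$. This yields
\[
\begin{bmatrix}\Gamma & (C_e+D_eK)X\\ \star & X\end{bmatrix}\succ0 .
\]
A Schur complement then gives $\Gamma\succ(C_e+D_eK)X(C_e+D_eK)^\top$.

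\emph{Step 3: link to the Gramian.} By Theorem~\ref{theorem:extended_data-driven_stability}, the same triple $(X,G,L)$ guarantees
\[
(\tilde A+\tilde BK)X(\tilde A+\tilde BK)^\top-X+E_dE_d^\top\prec0
\]
for the true system, since $B_w\Psi_{\mathrm{tr}}\in\tilde{\boldsymbol\Psi}_{\mathrm{com}}$. Write $A_{cl}=\tilde A+\tilde BK$, which is Schur stable. The closed-loop controllability Gramian $X_c$ of $\tilde d\mapsto\xi$ solves $A_{cl}X_cA_{cl}^\top-X_c+E_dE_d^\top=0$. Subtracting the two relations shows that $X-X_c$ satisfies a Stein inequality with a positive-definite right-hand side, so $X\succ X_c$. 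The closed-loop system $\tilde d\mapsto e$ has no direct feedthrough from $\tilde d$, because the $\tilde d$ column of the $e$-row in \eqref{eqn:LFT_dynamics} is zero. Hence
\[
\|\cdot\|_{\mathcal H_2}^2=\operatorname{trace}\bigl((C_e+D_eK)X_c(C_e+D_eK)^\top\bigr)\le\operatorname{trace}\bigl((C_e+D_eK)X(C_e+D_eK)^\top\bigr)<\operatorname{trace}(\Gamma)<\gamma_{\max}^2 .
\]

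The main obstacle is Step~3. The performance LMI \eqref{eqn:extended_H2_performance_LMI} does not itself involve the unknown dynamics, so the bound must be certified through the robust Lyapunov inequality inherited from Theorem~\ref{theorem:extended_data-driven_stability} (i.e.\ \cite[Thm.~1]{Berberich-et-al-2023-Prior-Knowledge-Data-Robust-Design}). It is therefore essential that the same $X$, $G$ and $L$ appear in both LMIs. It also has to be checked carefully that this inequality holds for the true $(\tilde A,\tilde B)$, and that the Gramian and trace characterization of the discrete-time $\mathcal H_2$ norm applies with the specified $E_d$ and with no feedthrough term.
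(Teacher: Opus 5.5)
Your proposal is correct and takes the same approach as the paper, whose proof just cites de Oliveira et al.'s extended $\mathcal{H}_2$ characterization (their Thm.~5) applied to the robust Lyapunov inequality of Berberich et al. (their Thm.~1), which here is inherited through Theorem~\ref{theorem:extended_data-driven_stability}. You spell those citations out explicitly, using the bound $G^\top X^{-1}G\succeq G+G^\top-X$, the congruence with $\operatorname{diag}(I,G^{-1}X)$, and the Gramian comparison $X\succ X_c$; this makes the argument self-contained and makes clear why the same $X$, $G$, $L$ must appear in both LMIs.
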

\begin{pf}
    The result follows as an application of \cite[Thm. 5]{De-Oliveira-2002-Extended-H2-H-inf-Norm-Characterizations-Controller-Parameterizations} to a result in \cite[Thm. 1]{Berberich-et-al-2023-Prior-Knowledge-Data-Robust-Design}. \qed
\end{pf}
For a given \(\gamma_\mathrm{max}\), a valid \(\mathcal{H}_2\) performance constraint \(\mathcal{C}_\mathrm{perf}\) is given by \(\operatorname{trace} \, (\Gamma) < \gamma_\mathrm{max}^2\) and \eqref{eqn:extended_H2_performance_LMI}.

\subsubsection{Implementation of the Co-Design Problem \eqref{eqn:comms_control_co-design}}
\label{subsec:co-design_implementation}

To optimize control performance, we minimize an upper bound on the squared closed-loop \(\mathcal{H}_2\)-norm, \(\operatorname{trace} \, (\Gamma)\). Combined with the communication link costs \(c_{ij}\) and structural \eqref{eqn:Gross-structure-constraints}, stability \eqref{eqn:extended_stability_data_driven_LMI}, and performance \eqref{eqn:extended_H2_performance_LMI} constraints, we formulate a complete co-design optimization problem to solve \eqref{eqn:comms_control_co-design}.
\begin{thm}
    \label{theorem:co-design_optimal_H2}
    Consider the extended state dynamics \eqref{eqn:extended_state_dynamics}. Suppose that Assumptions \ref{assum:LTI_dynamics_unknown}, \ref{assum:Berberich_noise_bounds} and \ref{assum:combined_uncertainty_set} hold. Then
    an optimal solution to the following MISDP provides a communication topology \(\Delta^*\) and compatible stabilizing controller \(K^*\), solving 
    \(\mathcal{P}_{\mathrm{Co}\mhyphen\mathrm{Des}}\) \eqref{eqn:comms_control_co-design}, that minimizes the sum of a certified upper bound, \(\operatorname{trace} \, (\Gamma)\), on the squared closed-loop \(\mathcal{H}_2\)-norm of \(\tilde{d} \mapsto e\) and the communication costs:
    {\allowdisplaybreaks
    \begin{subequations}
        \begin{align}
        \label{eqn:bounded_H2_cost_func}
            \min_{X, \Gamma \succ 0, \tilde{P}_\mathrm{com}, L, G, \Delta} & 
            \operatorname{trace} \, (\Gamma) +  \sum_{i,j=1}^M c_{ij} \delta_{ij} \\
            \mathrm{s.t.} \ \Delta = [\delta_{ij}]_{i,j=1:M}, \quad & \delta_{ij} \in \{0,1\} \ \forall i,j \in \mathcal{V}\\
        \label{eqn:extended_H2_performance_LMI_thm}
        {\renewcommand{\arraystretch}{1}\begin{bmatrix}
            \Gamma & C_e G + D_e L \\ \star & G+G^\top - X
        \end{bmatrix}} & \succ 0 \\
        \eqref{eqn:extended_stability_data_driven_LMI}, \ \tilde{P}_\mathrm{com} &\in \mathbf{\tilde{P}}_\mathrm{com} \\
        \label{eqn:big_M_structured_set_constraints_1}
        -\bar{M} \delta_{ij} \mathbbm{1} \leq E_{u,i} & L E_{\xi,j}^\top \leq \bar{M} \delta_{ij} \mathbbm{1} \ \forall (i,j)\\
        \label{eqn:big_M_structured_set_constraints_3}
        \begin{split}
            -\bar{M} (\delta_{ij} - \delta_{iz} + 1) \mathbbm{1} \leq &  \, E_{\xi,z} G E_{\xi,j}^\top \\  \leq \bar{M} (\delta_{ij} - & \delta_{iz} + 1) \mathbbm{1} \ \forall (i,j,z),
        \end{split}
        \end{align}
    \end{subequations}}
    
    where $\mathbbm{1}$ represents a matrix of ones of appropriate dimension, and \(K^* = L^*G^{*-1}\) is structured according to the topology \(\Delta^*\) and stabilizes the system dynamics \eqref{eqn:extended_state_dynamics} in closed-loop.
\end{thm}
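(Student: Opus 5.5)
The plan is to show that every feasible point of the MISDP yields a topology and controller that are feasible for \(\mathcal{P}_{\mathrm{Co}\mhyphen\mathrm{Des}}\) \eqref{eqn:comms_control_co-design}. The objective then upper-bounds the true cost, and optimality follows by taking the minimizer. The argument has three parts: stability, structure, and the performance bound, each handled by an earlier result.

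\textbf{Stability.} Take any feasible \((X,\Gamma,\tilde P_\mathrm{com},L,G,\Delta)\). Constraint \eqref{eqn:extended_stability_data_driven_LMI} holds with \(\tilde P_\mathrm{com}\in\mathbf{\tilde P}_\mathrm{com}\). By Theorem \ref{theorem:extended_data-driven_stability}, \(G\) is nonsingular, since \(G+G^\top\succ X\succ0\). Hence \(K=LG^{-1}\) and \(P=X^{-1}\) satisfy \eqref{eqn:Lyap_stability_condition}, so \(\mathcal{C}_\mathrm{stab}\) holds. \textbf{Performance.} The LMI \eqref{eqn:extended_H2_performance_LMI_thm} is exactly \eqref{eqn:extended_H2_performance_LMI}. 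Lemma \ref{lemma:berberich_H2_perf}, applied with any \(\gamma_\mathrm{max}^2>\operatorname{trace}(\Gamma)\), shows that the squared closed-loop \(\mathcal{H}_2\)-norm of \(\tilde d\mapsto e\) is at most \(\operatorname{trace}(\Gamma)\). Letting \(\gamma_\mathrm{max}^2\downarrow\operatorname{trace}(\Gamma)\) makes \(\operatorname{trace}(\Gamma)\) a certified upper bound, so the objective \eqref{eqn:bounded_H2_cost_func} has the form \(J(P,K)+\sum c_{ij}\delta_{ij}\) with \(J=\operatorname{trace}(\Gamma)\).

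\textbf{Structure.} I would show that the big-M constraints \eqref{eqn:big_M_structured_set_constraints_1}--\eqref{eqn:big_M_structured_set_constraints_3} imply \eqref{eqn:Gross-structure-constraints}. If \(\delta_{ij}=0\), the bound in \eqref{eqn:big_M_structured_set_constraints_1} forces \(E_{u,i}LE_{\xi,j}^\top=0\). If \(\delta_{ij}=0\) and \(\delta_{iz}=1\), then \(\delta_{ij}-\delta_{iz}+1=0\), and \eqref{eqn:big_M_structured_set_constraints_3} forces \(E_{\xi,z}GE_{\xi,j}^\top=0\). In every other case, \(\delta_{ij}-\delta_{iz}+1\ge1\), so the constraint only bounds entries by \(\bar M\) and is inactive for sufficiently large \(\bar M\). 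By \cite[Thm. 2]{Gross-2011-Optimized-Distributed-Control-Network-Topology-Design}, as recorded in Section \ref{subsec:structure-constraint}, \eqref{eqn:Gross-structure-constraints} implies \eqref{eqn:struc_implies_constraint_def} for \(K=LG^{-1}\). The underlying reason is that row block \(i\) of \(L\) is supported on the columns of \(\mathcal{N}_i=\{j:\delta_{ij}=1\}\), and \(G\) maps the span of those coordinates into itself in the relevant block sense. Consequently row block \(i\) of \(LG^{-1}\) is also supported on \(\mathcal{N}_i\). Here \(\delta_{ii}=1\) is used.

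\textbf{Main obstacle.} The delicate point is the role of \(\bar M\). The MISDP is equivalent to the implication-based problem only if \(\bar M\) bounds the entries of \(L\) and \(G\) at an optimum. I would state this as the standing assumption that \(\bar M\) is large enough, or that the feasible set is intersected with \(\|L\|_{\max},\|G\|_{\max}\le\bar M\). Under that assumption, the feasible set of the MISDP coincides with the feasible set of \(\mathcal{P}_{\mathrm{Co}\mhyphen\mathrm{Des}}\) built from \(\mathcal{C}_\mathrm{stab}\) \eqref{eqn:extended_stability_data_driven_LMI}, \(\mathcal{C}_\mathrm{str}\) \eqref{eqn:Gross-structure-constraints}, and the \(\mathcal{H}_2\) bound. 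Its minimizer \((\Delta^*,L^*,G^*)\) therefore solves \eqref{eqn:comms_control_co-design}, and \(K^*=L^*G^{*-1}\) is structured according to \(\Delta^*\) and stabilizing.
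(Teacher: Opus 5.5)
Your proposal is correct and follows the same route as the paper's proof. The big-M inequalities encode the implications \eqref{eqn:Gross-structure-constraints}, with \(\bar M\) taken large enough not to cut off the optimum. Gross et al.'s Thm.~2 then gives the sparsity of \(K=LG^{-1}\), Theorem~\ref{theorem:extended_data-driven_stability} gives stability, and Lemma~\ref{lemma:berberich_H2_perf} makes \(\operatorname{trace}(\Gamma)\) a certified bound, so the minimizer trades that bound off against the link costs.

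You also fill in details the paper leaves implicit: the case check on \(\delta_{ij}-\delta_{iz}+1\), the nonsingularity of \(G\), and the limit \(\gamma_{\max}^2\downarrow\operatorname{trace}(\Gamma)\). The only loose phrase is the heuristic that ``\(G\) maps the span of those coordinates into itself''. Since \(G_{zj}=0\) for \(z\in\mathcal{N}_i\), \(j\notin\mathcal{N}_i\), \(G\) is block-triangular, so the statement holds for right multiplication of row vectors supported on \(\mathcal{N}_i\), not for \(G\) acting on columns.
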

\begin{pf}
    We use the Big-M reformulation of the logical implication constraints \eqref{eqn:Gross-structure-constraints} as done in \cite{Gross-2011-Optimized-Distributed-Control-Network-Topology-Design} by choosing a $\bar{M} \in \mathbb{R}_{>0}$ sufficiently large to avoid excluding relevant feasible solutions, resulting in element-wise inequalities \eqref{eqn:big_M_structured_set_constraints_1}-\eqref{eqn:big_M_structured_set_constraints_3}. Any feasible controller \(K = LG^{-1}\) has the desired structure due to the constraints \eqref{eqn:big_M_structured_set_constraints_1} - \eqref{eqn:big_M_structured_set_constraints_3}. If Assumptions \ref{assum:LTI_dynamics_unknown}, \ref{assum:Berberich_noise_bounds} and \ref{assum:combined_uncertainty_set} hold, then by Thm. \ref{theorem:extended_data-driven_stability}, any feasible \(K\) stabilizes the extended state dynamics \eqref{eqn:extended_state_dynamics}. From Lemma \ref{lemma:berberich_H2_perf}, by minimizing \(\operatorname{trace} \, (\Gamma)\), we minimize the squared upper bound of the closed-loop \(\mathcal{H}_2\) norm. Considering the cost function leads to the conclusion that the optimal solution will minimize the sum of this bound and the overall communication costs. \qed
\end{pf}
Thm. \ref{theorem:co-design_optimal_H2} co-optimizes the topology and controller to balance closed-loop \(\mathcal{H}_2\) control performance with communication costs. We note that appropriate choices of \(C_e\) and \(D_{e}\) result in an LQR formulation.

\subsection{Control-Aware Communication Topology Design}
\label{section:control-aware_design}

When different entities design the topology and controller, the centralized co-design formulation is unavailable, motivating a modular design approach.
Moreover, agents may wish to avoid sharing their cost functions with the topology designer. We formulate a control-aware design approach to solve \(\mathcal{P}_{\mathrm{Ctrl}\mhyphen\mathrm{Aw}}\) \eqref{eqn:ctrl_aw_framework_optim_formulation} using the structural and stability constraints developed in Sect. \ref{section:constraint_formulation}. 

\subsubsection{Data-Driven Dynamic Coupling Estimation}
\label{subsec:data-driven_coupling_estimation}

The cost function of \(\mathcal{P}_{\mathrm{Ctrl}\mhyphen\mathrm{Aw}}\) is given by \(g(P,K,\mathcal{D},\Delta) + \sum_{i,j=1}^M c_{ij}\delta_{ij}\). We take \(g(P,K,\mathcal{D},\Delta)\) as a proxy for the closed-loop performance, and propose to approximate the value of communication links in terms of their impact on control cost by analyzing the open-loop coupling strength between pairs of subsystems. The coupling strength is estimated with a data-driven metric based on subspace predictive control (SPC) and robust regression. 
SPC \cite{Favoreel-et-al-1999-SPC} uses a linear output predictor:
{\renewcommand{\arraystretch}{1.0}
\begin{equation}
\label{eqn:spc_predictor}
    y^\mathrm{p}(k) = \Theta \begin{bmatrix}
        \xi(k) \\ u(k)
    \end{bmatrix},
\end{equation}}

where \(y^\mathrm{p}(k) \in \mathbb{R}^p\) is a 1-step output prediction and
$\Theta \in \mathbb{R}^{p \times (m + p)\ell + m}$ is the predictor matrix. Considering an output prediction $y^\mathrm{p}_i(k)$ for subsystem $i$, we use the SPC predictor \eqref{eqn:spc_predictor} to see that {\renewcommand{\arraystretch}{1.0}$y^\mathrm{p}_i(k) = \sum_{j=1}^M \Theta_{ij} \begin{bmatrix}
    \xi_{j}(k) \\ u_{j}(k)
\end{bmatrix}$}. Here \(\Theta_{ij}\) selects the rows for \(y_i\) and columns for (\(\xi_j,u_j\)), in that order.

\begin{prop}
\label{prop:SPC_scaled_error}
    Consider a 1-step output prediction for a subsystem \(i\) made using SPC \eqref{eqn:spc_predictor}. Then the squared Euclidean norm of the prediction change in \(y^\mathrm{p}_i(k)\) resulting from neglecting \(\Theta_{ij}\) for a given \(j\), scaled by \((\lVert \xi_{j}(k) \rVert_2^2 + \lVert u_{j}(k) \rVert_2^2)\), is upper-bounded by \(\lVert \Theta_{ij} \rVert_2^2\).
\end{prop}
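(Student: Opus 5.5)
The plan is to write the prediction change explicitly and then apply the definition of the induced 2-norm. Neglecting \(\Theta_{ij}\) means replacing it by zero in the decomposition
\(y^\mathrm{p}_i(k) = \sum_{j=1}^M \Theta_{ij} \begin{bmatrix} \xi_{j}(k)^\top & u_{j}(k)^\top \end{bmatrix}^\top\)
that follows from \eqref{eqn:spc_predictor}. Let \(\tilde y^\mathrm{p}_i(k)\) denote the resulting truncated prediction. Because the predictor is linear, the change is exactly the single omitted term:
\[
y^\mathrm{p}_i(k) - \tilde y^\mathrm{p}_i(k) = \Theta_{ij} \begin{bmatrix} \xi_{j}(k) \\ u_{j}(k) \end{bmatrix}.
\]
The first step is therefore simply to state this identity.

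The second step is to bound its squared Euclidean norm. Set \(\zeta_j(k) \coloneq \begin{bmatrix} \xi_{j}(k)^\top & u_{j}(k)^\top \end{bmatrix}^\top\). By submultiplicativity of the induced (spectral) norm,
\[
\lVert \Theta_{ij} \zeta_j(k) \rVert_2^2 \leq \lVert \Theta_{ij} \rVert_2^2 \, \lVert \zeta_j(k) \rVert_2^2 .
\]
The stacked vector satisfies \(\lVert \zeta_j(k) \rVert_2^2 = \lVert \xi_{j}(k) \rVert_2^2 + \lVert u_{j}(k) \rVert_2^2\).

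Finally, we interpret ``scaled by'' as division. Whenever \(\lVert \xi_{j}(k) \rVert_2^2 + \lVert u_{j}(k) \rVert_2^2 \neq 0\), dividing gives
\[
\frac{\lVert y^\mathrm{p}_i(k) - \tilde y^\mathrm{p}_i(k) \rVert_2^2}{\lVert \xi_{j}(k) \rVert_2^2 + \lVert u_{j}(k) \rVert_2^2} \leq \lVert \Theta_{ij} \rVert_2^2 .
\]
The bound is tight, since \(\lVert \Theta_{ij} \rVert_2\) is the supremum of this ratio over nonzero \(\zeta_j(k)\). In the degenerate case where \(\zeta_j(k) = 0\), the prediction change is itself zero, so we state the bound in the multiplicative form above.

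There is no substantial obstacle. The only care needed is to fix the meaning of ``neglecting \(\Theta_{ij}\)'' as setting that block to zero, and to handle the zero-denominator case. These are conventions of interpretation rather than technical difficulties.
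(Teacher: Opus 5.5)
Your proposal is correct and follows essentially the same approach as the paper: identify the prediction change as $\Theta_{ij}\begin{bmatrix}\xi_j(k)^\top & u_j(k)^\top\end{bmatrix}^\top$, bound it via consistency of the induced 2-norm with the Euclidean vector norm, and use $\lVert\zeta_j(k)\rVert_2^2=\lVert\xi_j(k)\rVert_2^2+\lVert u_j(k)\rVert_2^2$. Your handling of the zero-denominator case and your tightness remark are small additions the paper omits; the inequality you invoke is more precisely the defining consistency property $\lVert \Theta_{ij}\zeta\rVert_2\le\lVert \Theta_{ij}\rVert_2\lVert\zeta\rVert_2$, though submultiplicativity also covers it.
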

\begin{pf}
    The prediction change is given by {\renewcommand{\arraystretch}{1.0}\(\Theta_{ij} \begin{bmatrix}
        \xi_{j}(k) \\ u_{j}(k)
    \end{bmatrix}\)}. Applying the two-norm and using the consistency of an induced matrix norm, we see that the squared Euclidean norm of the prediction error is upper bounded by $\lVert \Theta_{ij} \rVert_2^2 \cdot (\lVert \xi_{j}(k) \rVert_2^2 + \lVert u_{j}(k) \rVert_2^2)$.
    \qed
\end{pf}

Proposition \ref{prop:SPC_scaled_error} motivates a normalized predictor sensitivity metric, used as a heuristic proxy for communication value. It does not bound closed-loop performance.
We define our data-driven metric for coupling strength:
\begin{defn}
    Given a linear predictor $\Theta$ with non-zero diagonal blocks, an estimate of the coupling strength from subsystem $j$ to subsystem $i$ is \(\eta \in \mathbb{R}^{M \times M}\), defined by:
    \begin{equation}
        \label{eqn:coupling_est}
        \eta_{ij} \coloneq \frac{ \lVert \Theta_{ij} \rVert_2^2}{\lVert \Theta_{ii} \rVert_2^2}.
    \end{equation}
    where \(\lVert \cdot \rVert_2\) denotes the induced matrix 2-norm.
\end{defn}
We have normalized by the self-coupling estimates to improve scaling. 
To estimate a structured predictor from data, we form subsystem-ordered data matrices by permuting with the selector matrices: {\renewcommand{\arraystretch}{1}\(\Xi^P = \begin{bmatrix}
    E_{\xi,1}^\top & \ldots & E_{\xi,M}^\top
\end{bmatrix}^\top \Xi\), \(U^F = \begin{bmatrix}
    E_{u,1}^\top & \ldots & E_{u,M}^\top
\end{bmatrix}^\top U\), and write an output data matrix \(Y^F = \begin{bmatrix}
        Y^{F^\top}_1 & \ldots & Y^{F^\top}_M
    \end{bmatrix}^\top\), where $Y_i^F = \begin{bmatrix}
    y^d_i(\ell) & \ldots & y^d_i(T-1)
\end{bmatrix} \ \forall i \in \mathcal{V}$.} The least-squares predictor is the solution to the regression problem \cite{Favoreel-et-al-1999-SPC}:
{\renewcommand{\arraystretch}{1.0}
\begin{equation}
    \label{eqn:spc_LS}
    \min_{\hat{\Theta}} \left\Vert Y^F - \hat{\Theta} \begin{bmatrix}
        \Xi^P \\ U^F
    \end{bmatrix} \right\Vert _F^2,
\end{equation}}

where the predictor columns are permuted consistently with the regressor ordering \(\begin{bmatrix}
    \Xi^{P^\top} & U^{F^\top}
\end{bmatrix}^\top\). To robustify \eqref{eqn:spc_LS} under a stochastic disturbance and output measurement noise, we apply stochastic robust approximation (\cite{Boyd_Vandenberghe_2004-Convex-Optimization-Textbook} Sect. 6.4.1). We calculate \(\Theta = \bar{Y}^F \bar{\Pi}^\top (\bar{\Pi} \bar{\Pi}^\top + \hat{W})^{-1}\), where \(\bar{Y}^F\) and $\bar{\Pi}$ are the empirical averages of \(Y^F\) and $\Pi$ across multiple experiments, respectively, with {\renewcommand{\arraystretch}{1.0}$\Pi = \begin{bmatrix}
    \Xi^{P ^ \top} & U^{F ^ \top}
\end{bmatrix}^\top$} for each experiment, and $\hat{W} = E[(\Pi - \bar{\Pi})(\Pi - \bar{\Pi})^\top]$, which we calculate empirically using the experimental data. We use the same input sequence $u^{d}_{[0:T-1]}$ for each experiment.

\subsubsection{Control-Aware Optimization Formulation for \eqref{eqn:ctrl_aw_framework_optim_formulation}}
\label{subsec:control_aware_optim_formulation}

The concept of control-aware design is based on choosing a stability constraint to ensure that
the second stage structured controller synthesis problem is feasible. 
The structure conditions \eqref{eqn:Gross-structure-constraints} may be combined with data-driven feedback control design to solve \(\mathcal{P}_\mathrm{SDS}\) \eqref{eqn:P_SDS_def} and find a structured stabilizing feedback controller.
\begin{lem}
\label{lemma:gross_stabilizing_structure}
    Consider the dynamics \eqref{eqn:global_dynamics} and the structure design and stabilization problem \(\mathcal{P}_\mathrm{SDS}\) \eqref{eqn:P_SDS_def}. Suppose that Assumptions \ref{assum:LTI_dynamics_unknown}, \ref{assum:Berberich_noise_bounds} and \ref{assum:combined_uncertainty_set} hold. Assume that there exist $X \succ 0$, $\tilde{P}_\mathrm{com} \in \mathbf{\tilde{P}}_\mathrm{com}$, $\Delta \in \{0,1\}^{M \times M}$, $L$, $G$ that satisfy \(\mathcal{C}_\mathrm{str}\) \eqref{eqn:Gross-structure-constraints} and \(\mathcal{C}_\mathrm{stab}\) \eqref{eqn:extended_stability_data_driven_LMI}.
    Then such feasible values of $\Delta$, $L$, $G$, $X$ and \(\tilde{P}_\mathrm{com}\) will give a controller $K = L G^{-1}$ that (i) stabilizes the dynamics \eqref{eqn:global_dynamics} and (ii) has a sparsity structure such that 
    $E_{u,i} K E_{\xi,j}^\top = 0 \ \forall (j,i) \notin \mathcal{E}_C$, thus solving \(\mathcal{P}_\mathrm{SDS}\). 
\end{lem}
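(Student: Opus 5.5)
The statement combines two facts already available in the paper: the stability part comes from Thm.~\ref{theorem:extended_data-driven_stability}, and the sparsity part comes from the sufficiency of \eqref{eqn:Gross-structure-constraints} for \eqref{eqn:struc_implies_constraint_def}. I would treat the two claims separately and then note that together they are exactly the requirements of \(\mathcal{P}_\mathrm{SDS}\) \eqref{eqn:P_SDS_def}.

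\emph{Claim (i).} Since \(X \succ 0\), \(\tilde{P}_\mathrm{com} \in \mathbf{\tilde{P}}_\mathrm{com}\), \(L\) and \(G\) satisfy \eqref{eqn:extended_stability_data_driven_LMI}, and Assumptions \ref{assum:LTI_dynamics_unknown}--\ref{assum:combined_uncertainty_set} hold, Thm.~\ref{theorem:extended_data-driven_stability} applies directly. Its proof shows that \(\chi = G + G^\top - X \succ 0\), so \(G + G^\top \succ X \succ 0\) and \(G\) is nonsingular. Hence \(K = LG^{-1}\) is well defined. The theorem then gives that \(P = X^{-1}\) and \(K\) satisfy \eqref{eqn:Lyap_stability_condition}, so the extended-state closed loop is input-to-state stable. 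To pass to the dynamics \eqref{eqn:global_dynamics}, I would invoke Assumption \ref{assum:LTI_dynamics_unknown}: observability, the known lag \(\ell\), and \(p\ell = n\) make \(x(k)\) a linear function of \(\xi(k)\) in the nominal case. Boundedness and convergence of \(\xi\) therefore transfer to \(x\), and \(u = K\xi\) is likewise bounded. I would state stability in this ISS sense with respect to \(\tilde d\).

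\emph{Claim (ii).} I would take \(\delta_{ii}=1\), as fixed in Section~\ref{subsec:structure-constraint}, and consider a pair \(i,j\) with \(\delta_{ij}=0\). The aim is \(E_{u,i}K E_{\xi,j}^\top = E_{u,i} L G^{-1} E_{\xi,j}^\top = 0\). The argument, following \cite[Thm.~2]{Gross-2011-Optimized-Distributed-Control-Network-Topology-Design}, uses the block-row support of \(E_{u,i}L\). Let \(\mathcal{S}_i = \{z : \delta_{iz}=1\}\). By \eqref{eqn:gross_M_mat_structure_constraint}, the block row \(E_{u,i}L\) is nonzero only in block columns \(z \in \mathcal{S}_i\). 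By \eqref{eqn:gross_X_mat_structure_constraint}, \(E_{\xi,z} G E_{\xi,j}^\top = 0\) for every \(z\in\mathcal{S}_i\) and every \(j\notin\mathcal{S}_i\). After permuting blocks so that \(\mathcal{S}_i\) comes first, \(G\) is block lower triangular with respect to the partition \((\mathcal{S}_i, \mathcal{S}_i^c)\): the \((\mathcal{S}_i,\mathcal{S}_i^c)\) block is zero.

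The inverse of an invertible block-triangular matrix is triangular with the same pattern, with invertible diagonal blocks because \(G\) is nonsingular. So \((G^{-1})_{\mathcal{S}_i,\mathcal{S}_i^c} = 0\). Then \(E_{u,i}LG^{-1}E_{\xi,j}^\top = \sum_{z\in\mathcal{S}_i} (E_{u,i}LE_{\xi,z}^\top)(E_{\xi,z}G^{-1}E_{\xi,j}^\top) = 0\) for \(j\notin\mathcal{S}_i\), which proves (ii). This is the main step, and the care needed is in the permutation bookkeeping and in the invertibility of the diagonal block \(G_{\mathcal{S}_i\mathcal{S}_i}\). That block is invertible because \(G_{\mathcal{S}_i\mathcal{S}_i} + G_{\mathcal{S}_i\mathcal{S}_i}^\top\) is a principal submatrix of \(G+G^\top \succ 0\).

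Finally, \((X,K,\Delta)\) satisfies \(\mathcal{C}_\mathrm{stab}\) and \(\mathcal{C}_\mathrm{str}\) with Boolean \(\Delta\), so it is a feasible point of \(\mathcal{P}_\mathrm{SDS}\).
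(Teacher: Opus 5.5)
Your proposal is correct and follows the paper's proof: part (i) comes from Thm.~\ref{theorem:extended_data-driven_stability}, and part (ii) comes from the sufficiency of \eqref{eqn:Gross-structure-constraints}, as in \cite[Thm.~2]{Gross-2011-Optimized-Distributed-Control-Network-Topology-Design}. The only difference is that you derive the cited structural result yourself, via the block-triangular pattern of $G$ and $G^{-1}$ with respect to $(\mathcal{S}_i,\mathcal{S}_i^c)$, and you also justify passing from the extended state $\xi$ to $x$, whereas the paper just cites both facts.
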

\begin{pf}
    If Assumptions \ref{assum:LTI_dynamics_unknown}, \ref{assum:Berberich_noise_bounds} and \ref{assum:combined_uncertainty_set} hold, any $\tilde{P}_\mathrm{com} \in \mathbf{\tilde{P}}_\mathrm{com}$, \(X \succ 0\) and $L$, $G$ satisfying \eqref{eqn:extended_stability_data_driven_LMI} will stabilize the dynamics \eqref{eqn:global_dynamics} by Thm. \ref{theorem:extended_data-driven_stability}. By \cite[Thm. 2]{Gross-2011-Optimized-Distributed-Control-Network-Topology-Design}, $K$ must have the desired sparsity pattern since \eqref{eqn:Gross-structure-constraints} is satisfied. \qed
\end{pf}
We combine the preceding results to formulate a data-driven control-aware communication topology design problem as an MISDP in order to solve \(\mathcal{P}_{\mathrm{Ctrl}\mhyphen\mathrm{Aw}}\) \eqref{eqn:ctrl_aw_framework_optim_formulation}. The cost function reflects a desire to prioritize communication between more strongly coupled subsystems. Therefore, if \(\delta_{ij} = 1\), we increase the cost by \(c_{ij}\) and reduce the cost by \(\eta_{ij}\). 
We arrive at the following optimization problem:
\begin{subequations}
\allowdisplaybreaks
\label{eqn:control_aware_comms_optim_MISDP}
	\begin{align}
    \label{eqn:cost_func}
		\min_{X \succ 0,\tilde{P}_\mathrm{com}, \Delta,L,G} & \sum_{i,j=1}^M \delta_{ij}(c_{ij} - \eta_{ij}) \\
        \label{eqn:Boolean_delta_ij_subeq}
        \mathrm{s.t.} \ \Delta = [\delta_{ij}]_{i,j=1:M}, \quad \delta_{ij}& \in \{0,1\} \ \forall i,j \in \mathcal{V}\\
        \eqref{eqn:extended_stability_data_driven_LMI}, \ \tilde{P}_\mathrm{com} &\in \mathbf{\tilde{P}}_\mathrm{com} \\
        \eqref{eqn:big_M_structured_set_constraints_1} - \eqref{eqn:big_M_structured_set_constraints_3}. &
	\end{align}
\end{subequations}
The characteristics of our control-aware topology optimization problem are described in Thm. \ref{theorem:MISDP_opt}:
\begin{thm}
\label{theorem:MISDP_opt}
    Consider the dynamics \eqref{eqn:extended_state_dynamics}, the control-aware topology design problem \(\mathcal{P}_{\mathrm{Ctrl}\mhyphen\mathrm{Aw}}\) and the optimization program \eqref{eqn:control_aware_comms_optim_MISDP}, which is convex in its continuous variables and is characterized as an MISDP. Suppose that Assumptions \ref{assum:LTI_dynamics_unknown}, \ref{assum:Berberich_noise_bounds} and \ref{assum:combined_uncertainty_set} hold. Then a feasible solution provides a cyber-layer graph $\mathcal{G}_C$ such that: \\
    (i) The controller $K = L G^{-1}$ respects the sparsity pattern specified by $\mathcal{G}_C$; that is, $E_{u,i} K E_{\xi,j}^\top = 0 \ \forall (j,i) \notin \mathcal{E}_C$, \\
    (ii) $K$ stabilizes the extended state dynamics \eqref{eqn:extended_state_dynamics}.\\
    An optimal solution solves \(\mathcal{P}_{\mathrm{Ctrl}\mhyphen\mathrm{Aw}}\) \eqref{eqn:ctrl_aw_framework_optim_formulation} and will have an associated cyber-layer edge set $\mathcal{E}_C$ that gives the minimum sum of the coupling estimates $\eta_{ij}$ subtracted from the included communication link costs; that is, $\sum_{(j,i) \in \mathcal{E}_C} c_{ij} - \eta_{ij}$, whilst satisfying feasibility.
\end{thm}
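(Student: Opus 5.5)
The plan is to reduce the statement to results already established, in the same way the proof of Thm.~\ref{theorem:co-design_optimal_H2} does. First, the MISDP characterization. For any fixed Boolean assignment $\Delta \in \{0,1\}^{M \times M}$, the remaining constraints are convex in the continuous variables $(X,\tilde{P}_\mathrm{com},L,G)$:
\begin{itemize}
\item \eqref{eqn:extended_stability_data_driven_LMI} is affine in $(X,\tilde{P}_\mathrm{com},L,G)$, since $\chi = G + G^\top - X$ enters linearly and the data-dependent blocks $A'G+B'L$ and $C_zG+D_zL$ are linear.
\item $\tilde{P}_\mathrm{com}\in\mathbf{\tilde{P}}_\mathrm{com}$ is an LMI-representable convex cone by Assumption~\ref{assum:combined_uncertainty_set}.
\item The Big-M constraints \eqref{eqn:big_M_structured_set_constraints_1}--\eqref{eqn:big_M_structured_set_constraints_3} are elementwise linear inequalities in $(L,G,\Delta)$.
\end{itemize}
The objective \eqref{eqn:cost_func} is linear in $\Delta$. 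So \eqref{eqn:control_aware_comms_optim_MISDP} is a linear objective over LMIs plus linear constraints with integer variables, that is, an MISDP that is convex in its continuous variables.

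Second, items (i) and (ii) for a feasible solution. I would check that the Big-M inequalities encode \eqref{eqn:Gross-structure-constraints}. If $\delta_{ij}=0$, then \eqref{eqn:big_M_structured_set_constraints_1} forces $E_{u,i}LE_{\xi,j}^\top=0$, giving \eqref{eqn:gross_M_mat_structure_constraint}. If $\delta_{ij}=0$ and $\delta_{iz}=1$, then $\delta_{ij}-\delta_{iz}+1=0$, so \eqref{eqn:big_M_structured_set_constraints_3} forces $E_{\xi,z}GE_{\xi,j}^\top=0$, giving \eqref{eqn:gross_X_mat_structure_constraint}. In every other case $\delta_{ij}-\delta_{iz}+1\ge1$, so the constraint only bounds the entries by $\bar M$. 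Hence a feasible point satisfies $\mathcal{C}_\mathrm{str}$ and $\mathcal{C}_\mathrm{stab}$, and Lemma~\ref{lemma:gross_stabilizing_structure} applies directly. Thm.~\ref{theorem:extended_data-driven_stability} gives that $G$ is nonsingular, since $G+G^\top\succ X\succ0$. It also gives that $K=LG^{-1}$ satisfies \eqref{eqn:Lyap_stability_condition}, so the extended dynamics \eqref{eqn:extended_state_dynamics} are ISS, which is (ii). \cite[Thm.~2]{Gross-2011-Optimized-Distributed-Control-Network-Topology-Design} gives the sparsity $E_{u,i}KE_{\xi,j}^\top=0$ whenever $\delta_{ij}=0$. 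By \eqref{eqn:define_delta_ij} this is exactly the condition for $(j,i)\notin\mathcal{E}_C$, which is (i).

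Third, optimality. Choose $g(P,K,\mathcal{D},\Delta)=-\sum_{i,j}\eta_{ij}\delta_{ij}$ in $\mathcal{P}_{\mathrm{Ctrl}\mhyphen\mathrm{Aw}}$, with no $\mathcal{C}_\mathrm{perf}$. The feasible set of \eqref{eqn:control_aware_comms_optim_MISDP} is contained in that of $\mathcal{P}_\mathrm{SDS}$, by the previous step. Its objective equals $\sum_{(j,i)\in\mathcal{E}_C}(c_{ij}-\eta_{ij})$ because $\delta_{ij}=1$ exactly on $\mathcal{E}_C$. Therefore an optimizer attains the minimum of this quantity among all topologies that admit a certificate in the chosen constraint class, and it solves \eqref{eqn:ctrl_aw_framework_optim_formulation} with this $g$.

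The main obstacle is the Big-M step. The Big-M set may exclude structured feasible points whose entries of $L$ or $G$ exceed $\bar M$. I would handle this as in Thm.~\ref{theorem:co-design_optimal_H2}: assume $\bar M$ is sufficiently large, so that optimality is stated relative to \eqref{eqn:Gross-structure-constraints}. Alternatively, state the optimality claim relative to the Big-M-restricted feasible set. I would also note that a minimizer exists, because for each of the finitely many $\Delta$ the continuous part is only a feasibility problem; the objective depends on $\Delta$ alone, so no infimum-attainment issue arises.
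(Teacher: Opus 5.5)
Your proposal is correct and follows the paper's proof: (i) and (ii) come from Lemma~\ref{lemma:gross_stabilizing_structure} and Thm.~\ref{theorem:extended_data-driven_stability}. The optimality claim comes from $\delta_{ij}=1$ holding exactly on $\mathcal{E}_C$, so the objective equals $\sum_{(j,i)\in\mathcal{E}_C}(c_{ij}-\eta_{ij})$.

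You also check several details the paper leaves implicit: the affine/MISDP structure, that the Big-M inequalities enforce the structural conditions, and existence of a minimizer by finite enumeration of $\Delta$. Your caveat that $\bar{M}$ must be sufficiently large for optimality relative to the unrestricted structural constraints is one the paper states only in the co-design proof.
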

\begin{pf}
    For any feasible solution satisfying the constraints, (i) is proven by Lemma \ref{lemma:gross_stabilizing_structure} since there is an edge \((j,i) \in \mathcal{E}_C\) if and only if \(\delta_{ij} = 1\). Under Assumptions \ref{assum:LTI_dynamics_unknown}, \ref{assum:Berberich_noise_bounds} and \ref{assum:combined_uncertainty_set}, stability (ii) is proven as a consequence of Thm. \ref{theorem:extended_data-driven_stability}. If $(j,i) \in \mathcal{E}_C$ then $(c_{ij} - \eta_{ij})$ is included in the optimal cost due to \eqref{eqn:cost_func}, where $\mathcal{E}_C$ is determined by the optimal solution $\Delta^*$ and \(c_{ij}\) represents a link cost, completing the proof.
    \qed
\end{pf}
The formulation \eqref{eqn:control_aware_comms_optim_MISDP} satisfies our conditions for a control-aware communication topology design. We have \(\mathcal{C}_\mathrm{stab}\) given by \eqref{eqn:extended_stability_data_driven_LMI} whilst \(\mathcal{C}_\mathrm{str}\) is encoded by \eqref{eqn:big_M_structured_set_constraints_1}-\eqref{eqn:big_M_structured_set_constraints_3} (cf. \eqref{eqn:ctrl_aw_framework_optim_formulation}). 
\begin{rem}
    Removing \(\eta\) from \eqref{eqn:cost_func} minimizes communication cost under the stated synthesis constraints. If \(c_{ij}>0\) for \(i\ne j\) and \(c_{ii}=0\), a zero optimum certifies zero unstable design-dependent DFMs and feasibility of decentralized stabilization under these constraints.
\end{rem}

\begin{rem}
    A worst-case performance requirement can be imposed by adding a \(\mathcal{C}_\mathrm{perf}\), for example \eqref{eqn:extended_H2_performance_LMI_thm} and \(\operatorname{trace} \, (\Gamma) \leq \gamma_\mathrm{max}^2\), to \eqref{eqn:control_aware_comms_optim_MISDP}. This interpolates between purely control-aware design and full co-design.
\end{rem}

\section{Numerical Examples}
\label{sec:simulations}

Datasets are generated by exciting each system with \(u_i^d(k)\overset{\mathrm{i.i.d.}}{\sim}\mathcal{U} [-1,1]\) and recording the resulting outputs. Communication topologies are designed using the proposed control-aware and co-design methods. To compare the quality of the selected topologies consistently, a structured \(\mathcal{H}_2\) controller is subsequently synthesized for each topology by minimizing \(\operatorname{trace} \, (\Gamma)\) subject to the proposed stability \eqref{eqn:extended_stability_data_driven_LMI}, structural \eqref{eqn:Gross-structure-constraints} and performance \eqref{eqn:extended_H2_performance_LMI} constraints. We define \(\mathbf{\tilde{P}}_\mathrm{com}\) using prior knowledge in the form of norm bounds on the true uncertainty \(\Psi_\mathrm{tr} \Psi_\mathrm{tr}^\top \preceq \bar{\psi} I\) and disturbance \(\tilde{D}_\mathrm{tr} \tilde{D}_\mathrm{tr}^\top \preceq \bar{d} I\); see \cite{Berberich-et-al-2023-Prior-Knowledge-Data-Robust-Design} Sect. V.B. for details. We let \(c_{ij} = c \ \forall i,j \neq i \), \(c_{ii} = 0 \ \forall i \in \mathcal{V}\), and we vary \(c\) over an appropriate grid to produce distinct optimized link counts. All optimization programs are formulated in MATLAB using the YALMIP toolbox \cite{Lofberg-2004-YALMIP} and solved using YALMIP BNB and Mosek \cite{mosek}, with computations performed on an Intel 11\textsuperscript{th} Gen i7 processor with 16 GB RAM.

\subsection{4-Subsystem Chain Example}
\label{subsec:4-subsys_sims}



Consider the discretized and linearized swing equations commonly used to model small-signal frequency dynamics in electric power transmission systems as given in the example in \cite{Alonso-2022-Data-Driven-Distributed-Localized-MPC}. Each subsystem has the dynamics \eqref{eqn:subsys_dynamics}, with {\renewcommand{\arraystretch}{1}$x_i(k) = \begin{bmatrix}
    \theta_i(k) & \omega_i(k)
\end{bmatrix}^\top$, $B_i = F_i = \begin{bmatrix}
    0 & \frac{1}{h_i} \Delta t
\end{bmatrix}^\top$, $C_i = \begin{bmatrix}
        1 & 0
    \end{bmatrix}$, $D_i = \mathbf{0}^{p_i \times m_i}$ and:
\begin{equation}
\label{eqn:sim_system}
    A_{ii} = \begin{bmatrix}
        1 & \Delta t \\
        -\frac{k_i}{h_i} \Delta t & 1 - \frac{d_i}{h_i} \Delta t
    \end{bmatrix}, \ A_{ij} = {\begin{bmatrix}
        0 & 0 \\
        \frac{k_{ij}}{h_i} \Delta t & 0
    \end{bmatrix}} \ \forall j \neq i,
\end{equation}}

where $\theta_i$ and $\omega_i$ represent phase angle and frequency deviations, and $h_i$, $d_i$ and $k_{ij}$ represent inertia, damping and coupling for each subsystem, respectively, with the time step $\Delta t = 0.2 \ \mathrm{s}$ and $k_i = \sum_{j \in \mathcal{V}\backslash \{i\}}{k_{ij}}$.
We simulate a chain of \(M=4\) subsystems 
with parameters: 
\(h = \left[1.4 \ 0.5 \ 0.6 \ 1 \right]^\top\), \(d = \left[ 1.1 \ 0.8 \ 0.9 \ 1.4 \right]^\top\), \(k_{12} = k_{21} = 1.25\), \(k_{23} = k_{32} = 2.25\), \(k_{34} = k_{43} = 0.5\), and all other \(k_{ij} = 0\). We have \(p = m = 4\), \(\ell = 2\) and \(n = 8\).
We choose {\renewcommand{\arraystretch}{1}\(C_e = \begin{bmatrix}
    \mathrm{diag}(0, I) & 0
\end{bmatrix}^\top\), \(D_{e} = \begin{bmatrix}
    0 & I
\end{bmatrix}^\top\)}, and \(\bar{M} = 1000\). We design structured \(\mathcal{H}_2\) controllers for all 64 topologies with bidirectional links to assess the optimality of our topology design.
For noise-free simulations, we take \(T = 100\) and \(\bar{\psi} = 20\), whilst with noise, we choose \(T = 300\), \(\bar{\psi} = 5\) and model the process disturbance and measurement noise as \(\varepsilon(k) \overset{\mathrm{i.i.d.}}{\sim} \mathcal{U}[-0.02,0.02]\) and \(v(k) \overset{\mathrm{i.i.d.}}{\sim} \mathcal{U}[-0.001,0.001]\), with a corresponding \(\bar{d} = 0.05\).
An upper bound on control performance is obtained during synthesis (i.e., the optimal value of \(\operatorname{trace} \, (\Gamma)\)), whilst a realized cost is found by closed-loop system performance analysis following controller design.
Relative to the exhaustive bidirectional comparison set, co-design attains the lowest synthesis bounds and realized closed-loop cost, whilst control-aware design attains the lowest or near-lowest values of both (Fig. \ref{fig:4-subsystem_results}). The synthesis bound is conservative, especially for sparse topologies (Fig. \ref{fig:4-subsystem_results}(b)); this gap is consistent with conservatism introduced by the sufficient structural conditions \eqref{eqn:Gross-structure-constraints}. 

Performance under control-aware topologies is degraded when the number of links increases from \(3\) to \(6\). This non-monotonicity is consistent with conservatism of \eqref{eqn:Gross-structure-constraints}: when a link is added to a given topology the constraints on \(L\) and \(G\) are changed, such that a feasible solution may not remain feasible for the new topology. To investigate this effect, we find topologies using a model-based sparsity-promoting control algorithm \cite{Lin-2013-Design-Optimal-Sparse-Feedback-Gains-via-ADMM}, then design data-driven controllers for each model-based topology. As shown in Fig. \ref{fig:4-subsystem_results}(a), control performance under the model-based topologies displays the same non-monotonicity as the data-driven control-aware topology results.
\begin{figure*}[h!]
    \centering
    \includegraphics[width=0.95\textwidth]{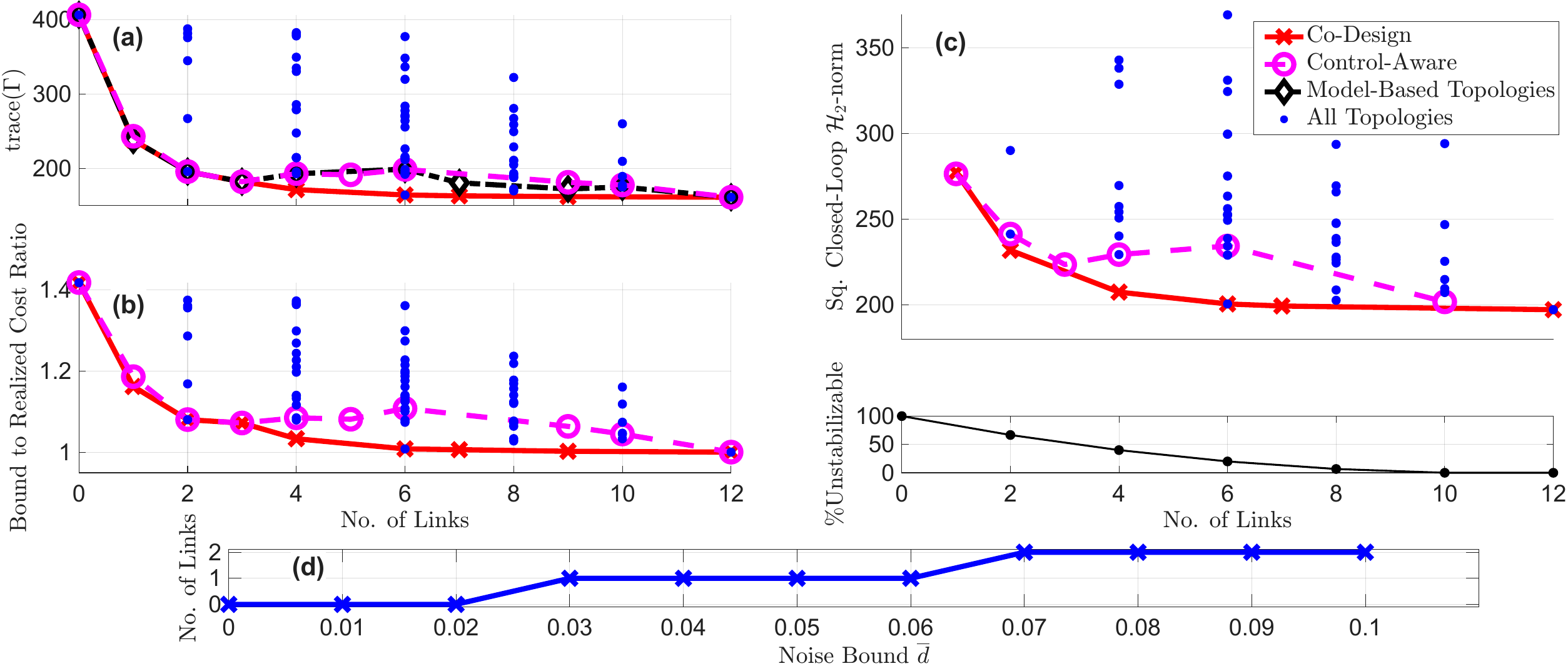}
    \caption{Results for 4-subsystem experiments. (a) Noise-free synthesis cost bound. (b) Ratio of the noise-free synthesis bound to the realized squared \(\mathcal{H}_2\) cost. (c) With noise (\(\bar{d} = 0.05\)), showing (upper panel) realized squared \(\mathcal{H}_2\) cost and (lower panel) percentage of infeasible bidirectional topologies. (d) Minimum feasible communication link count against disturbance bound \(\bar{d}\).}
    \label{fig:4-subsystem_results}
\end{figure*}
Results in Fig. \ref{fig:4-subsystem_results}(d) show that as the noise bound increases from zero, the number of links required to stabilize the system under our synthesis conditions increases from zero to one then two, before our robust stabilization condition becomes infeasible. Standard fixed-mode analysis alone cannot reveal this loss of robust feasibility since the decentralized topology is feasible in the noise-free case. Control-aware design achieved appreciably lower computation times than co-design across a range of communication costs for chains of 4, 7, 9 and 14 subsystems, as shown in Fig. \ref{fig:computation_time}.
\begin{figure}[h]
    \centering
    \includegraphics[width=0.6\linewidth]{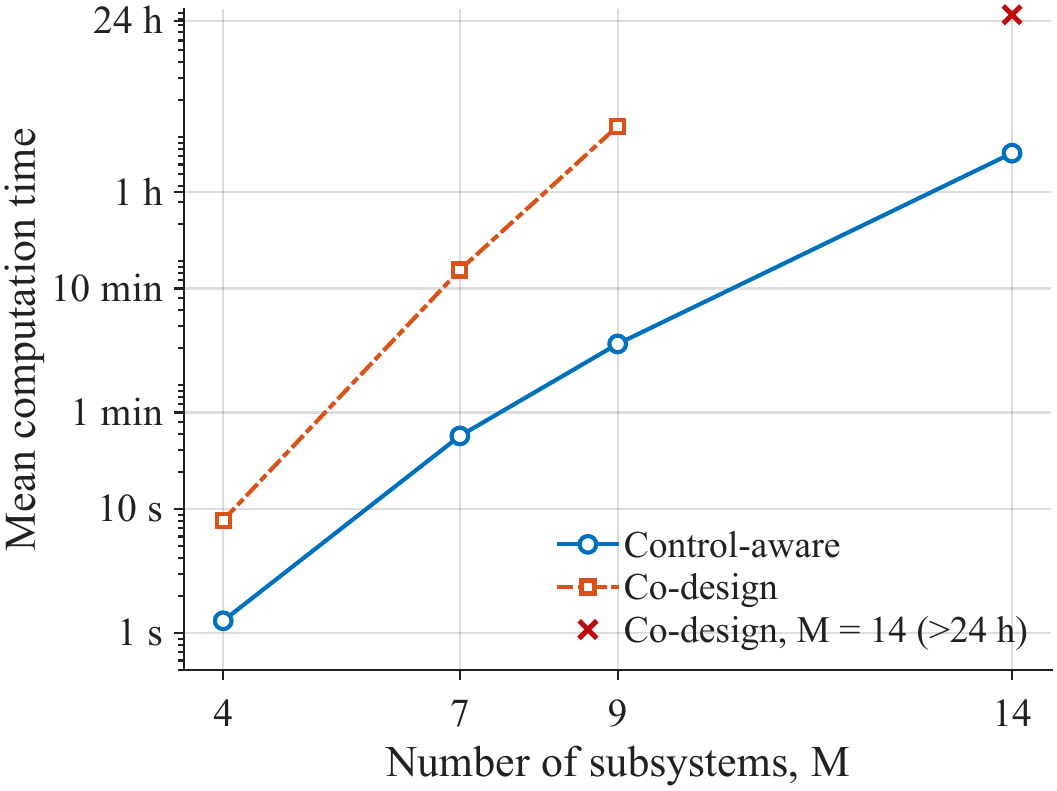}
    \caption{Mean computation times of control-aware design and co-design for chain systems, with a 24 hour timeout.}
    \label{fig:computation_time}
\end{figure}

\subsection{Decentralized Fixed Mode Example}
\label{subsec:DFM_sims}

We consider the three-agent system in \cite{Gross-2011-Optimized-Distributed-Control-Network-Topology-Design}, where \(x_i(k) \in \mathbb{R}^2\), \(u_i(k) \in \mathbb{R}\) \(\forall i \in \{1,2,3\}\), there is no process disturbance (\(\bar{d} = 0\)) and dynamics are LTI. We assume full-state measurements (\(y(k) = x(k)\)) so \(\ell = 1\). The discrete-time system has a DFM on the unit circle, which prevents decentralized asymptotic stabilization. However, it can be achieved if Agent 1 receives state measurements from Agent 2.
Our previous work \cite{Nestor-2025-SysDO} is unable to guarantee asymptotic stabilization of this system as it assumes that any DFMs are stable. Our proposed synthesis conditions here only identify communication topologies admitting a stabilizing controller. We choose \(T = 100\), \(\bar{\psi} = 20\), \(\bar{M} = 1000\), and the performance signal such that it is equivalent to the cost weights \(Q\), \(R\) in \cite{Gross-2011-Optimized-Distributed-Control-Network-Topology-Design}, map the scheme in \cite{Gross-2011-Optimized-Distributed-Control-Network-Topology-Design} to the extended state system and compare the results to our data-driven approaches in Table \ref{tab:DFM_results_Gamma}. Data-driven co-design closely matches the model-based co-design benchmark, whilst the control-aware approach is conservative for the 4-link topology as discussed above in Sect. \ref{subsec:4-subsys_sims}.
\begin{table}[h]
   \centering
   \caption{Control cost upper bound, \(\operatorname{trace} \, (\Gamma)\), for Sect. \ref{subsec:DFM_sims} example.}
   {\renewcommand{\arraystretch}{1.4}
   \begin{tabular}{c|c|c|c}
\hline
   \# Links & Control-aware & Co-design & Model \cite{Gross-2011-Optimized-Distributed-Control-Network-Topology-Design} \\ \hline \hline
   6 & 76.4 & 76.4 & 76.4  \\ \hline
   4 & 120.8 & 87.5 & 87.2 \\ \hline
   2 & 94.5 & 94.5 & 94.5 \\ \hline
   1 & 125.0 & 125.0 & 125.0 \\ \hline
   \end{tabular}}
   \label{tab:DFM_results_Gamma}
\end{table}

\subsection{IEEE 14-Bus Example}
\label{subsec:14-subsys_sims}


\begin{figure}
    \centering
    \includegraphics[width=0.75\linewidth]{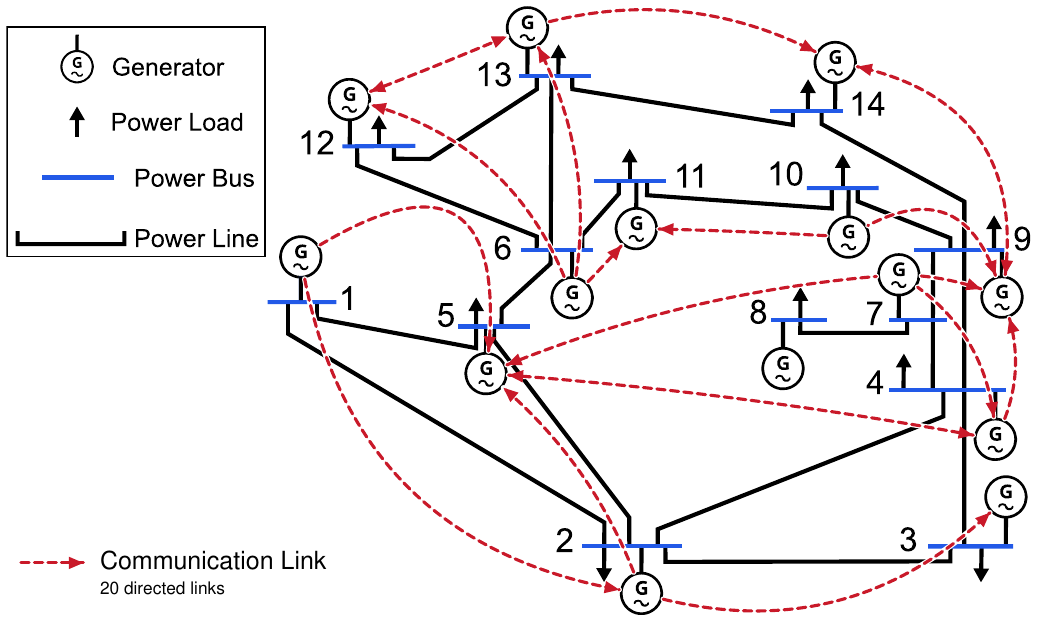}
    \caption{The IEEE 14-bus test system topology, with a control-aware 20 link communication topology.}
    \label{fig:IEEE-14-bus}
\end{figure}
We consider the IEEE 14-bus test system for electric power transmission systems and assume inertia and damping are present at all nodes. The subsystem dynamics are given by \eqref{eqn:subsys_dynamics}, with the definitions of \(x_i(k)\), \(A_{ij}\), \(B_i\), \(C_i\) and \(D_i\) the same as in Sect. \ref{subsec:4-subsys_sims}. The physical coupling topology of the test system is shown in Fig. \ref{fig:IEEE-14-bus}, with parameters drawn from uniform distributions according to \(h_i \sim \mathcal{U}[0.01,2]\), \(d_i \sim \mathcal{U}[0.5,1]\), \(k_{ij} \sim \mathcal{U}[1,1.5]\), \(k_{ij} = k_{ji} \ \forall i,j\) if \(i\) and \(j\) are physically coupled; otherwise \(k_{ij} = k_{ji} = 0\). We use \(\Delta t = 0.2\) s, \(T = 850\), \(\bar{\psi} = 100\) and \(\bar{M} = 1000\). We optimize the communication topology using the control-aware approach for a range of communication costs without any noise or disturbance, and compute a structured \(\mathcal{H}_2\) controller following topology design.
The achieved control cost and its synthesis upper bound are closely correlated, with cost and topology sparsity increasing with communication cost (Fig. \ref{fig:14_Bus_Results}). No improvements in control performance are obtained once physically coupled subsystems communicate. Control-aware design took around one hour on average, whilst co-design did not terminate within a 24 hour time limit. 
\begin{figure}[h]
    \centering
    \includegraphics[width=0.95\linewidth]{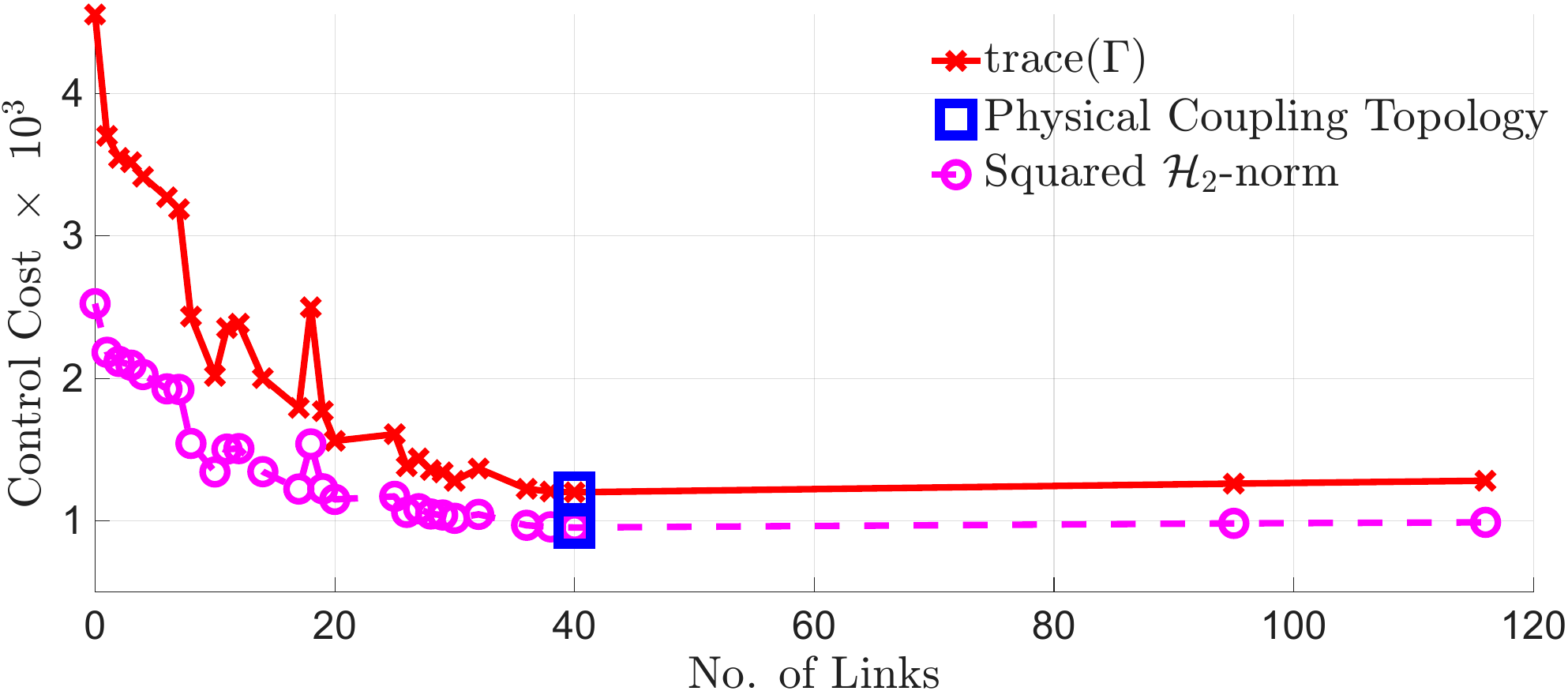}
    \caption{Closed-loop squared \(\mathcal{H}_2\)-norm (magenta) and control cost upper bound, \(\operatorname{trace} \, (\Gamma)\) (red), plotted against communication link count for the IEEE 14-bus test case.}
    \label{fig:14_Bus_Results}
\end{figure}

\section{Conclusion}
\label{sec:conclusion}


In this paper, we proposed a control-aware approach to communication topology design, coupling control structure characterization with convex synthesis methods. A data-driven scheme for control-aware design was presented as an MISDP, including a control performance proxy based on the open-loop coupling strength between subsystems. A data-driven co-design MISDP was developed, minimizing a certified upper bound on the squared closed-loop \(\mathcal{H}_2\)-norm together with communication costs. Simulations demonstrated that the communication required for data-driven stabilization feasibility depends on dataset noise, and in the tested examples, the control-aware method produced competitive topologies with lower computation times than co-design. However, the condition \(p \ell = n\) is necessary for our approach, limiting its real-world applicability. Future avenues for investigation include aiming to relax this condition and considering topology design in the distributed synthesis setting.


\ack

The authors would like to thank Dr Yulong Gao, Dr Boli Chen, Dr Julian Berberich, and Dr Giordano Scarciotti for insightful discussions and advice.


\appendix

\bibliographystyle{ieeetr}
\bibliography{bibliography}

\begin{wrapfigure}[15]{L}{0.14\textwidth}
    \centering
    \includegraphics[width=0.16\textwidth]{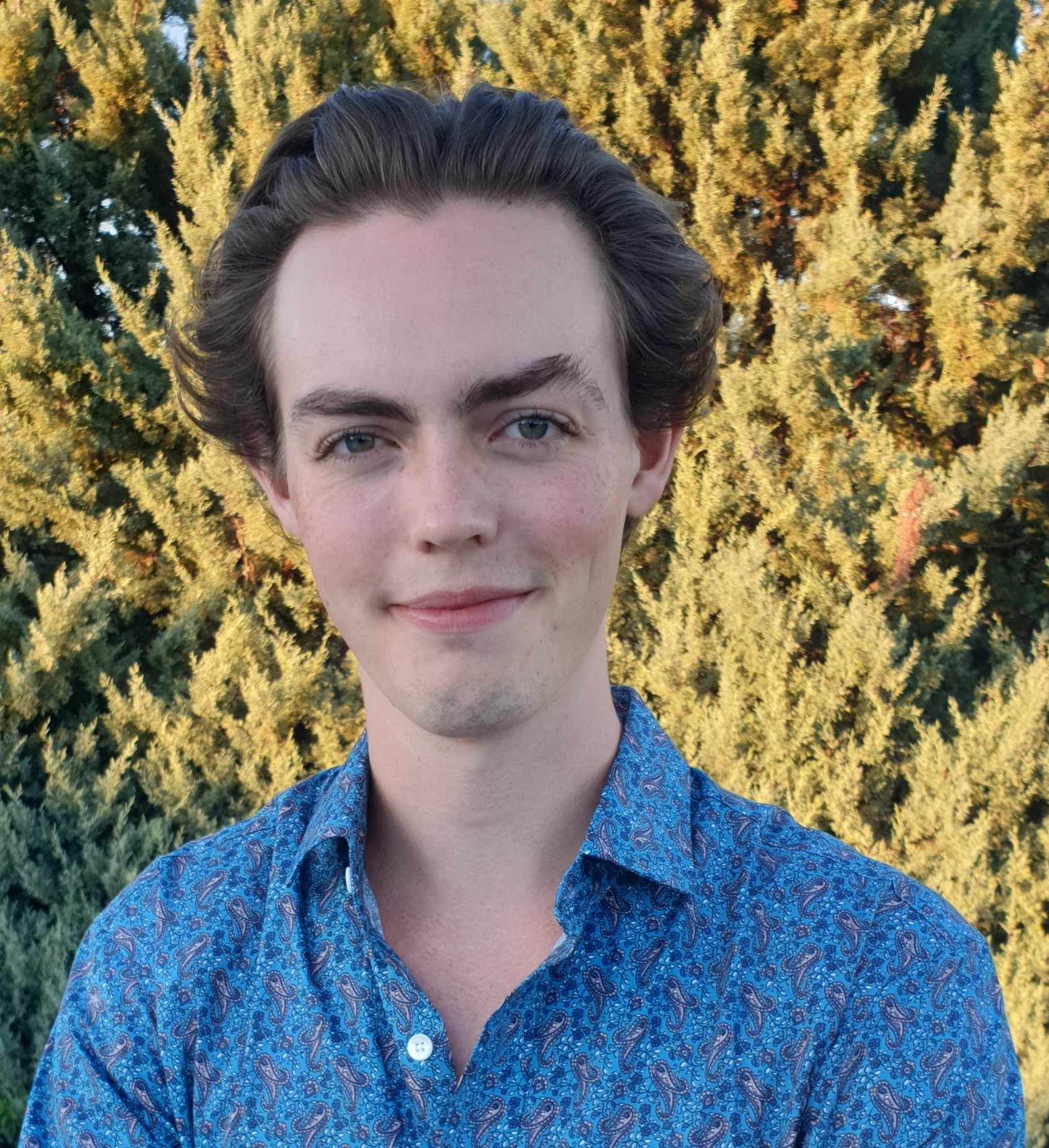}
\end{wrapfigure} 
\textbf{Michael Nestor} received the M.Eng. degree in engineering science in 2021 from the University of Oxford, U.K. He is currently working towards the Ph.D. degree at Imperial College, London, U.K. He was a visiting student at Tsinghua University in 2025. His research interests include distributed data-driven control of interconnected systems, trade-offs in communication and control system design, and power system service provision using distributed energy resources.

\begin{wrapfigure}[13]{l}{0.14\textwidth}
    \centering
    \includegraphics[width=0.16\textwidth]{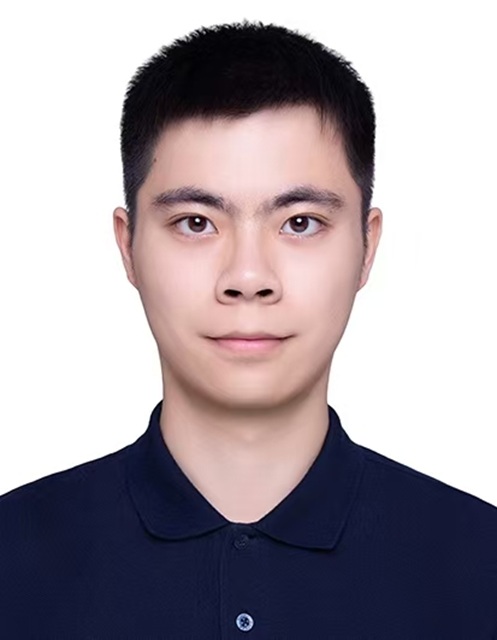}
\end{wrapfigure}
\textbf{Jiaxin Wang} received the B.S. degree in electrical engineering in 2022 from Tsinghua University, Beijing, China, where he is currently working towards the Ph.D. degree. He was a visiting student at Imperial College, London in 2025/2026. His research interests include power system scheduling and planning, stability analysis and stability constrained optimization and control, and AI techniques in power systems.

\begin{wrapfigure}[17]{l}{0.14\textwidth}
    \centering \includegraphics[width=0.16\textwidth]{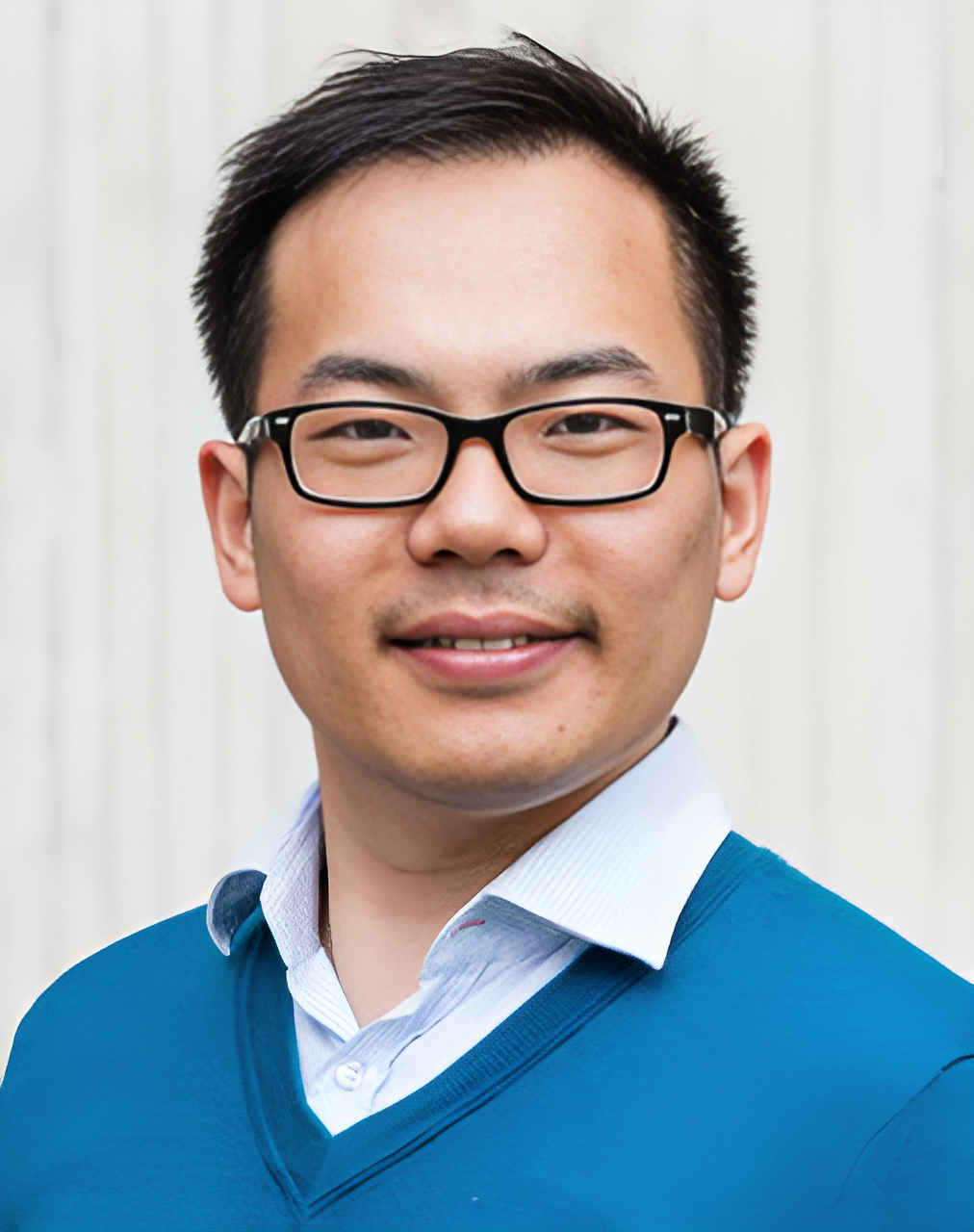}
\end{wrapfigure}
\textbf{Fei Teng} received the B.Eng. degree in electrical engineering from Beihang University, China, in 2009, and the M.Sc. and Ph.D. degrees in electrical engineering from Imperial College, London, U.K., in 2010 and 2015, respectively, where he is currently a Reader in Intelligent Energy Systems with the Department of Electrical and Electronic Engineering. His research focuses on power system operation with a high penetration of inverter-based resources, and the cyber-resilient and privacy-preserving cyber-physical power grid.

\end{document}